\documentclass[aps,prd,preprint,superscriptaddress,nofootinbib,
               showkeys,amsmath,amssymb]{revtex4-2}

\usepackage{graphicx}
\usepackage{mathtools}
\usepackage{amsthm}
\usepackage{xcolor}          
\usepackage[colorlinks=true,linkcolor=blue!60!black,
            citecolor=blue!60!black,urlcolor=blue!60!black]{hyperref}

\theoremstyle{plain}
\newtheorem{theorem}{Theorem}
\newtheorem{lemma}[theorem]{Lemma}
\newtheorem{proposition}[theorem]{Proposition}
\newtheorem{corollary}[theorem]{Corollary}
\theoremstyle{remark}
\newtheorem{remark}[theorem]{Remark}

\begin{document}

\title{Conformal flatness selects a universal isothermal attractor in
pure Lovelock gravity}

\author{Sudan Hansraj}
\email{hansrajs@ukzn.ac.za}
\affiliation{Astrophysics Research Centre,
School of Mathematics, Statistics and Computer Science,
University of KwaZulu-Natal, Private Bag X54001, Durban 4000,
South Africa}

\date{\today}

\begin{abstract}
We determine the complete solution set of the conformally flat
isotropic pure Lovelock field equations for every order $N\ge2$ and
every admissible dimension $d\ge2N+1$. Conformal flatness and pressure
isotropy combine into a single identity that factorises exactly,
splitting the solutions into two branches. The first is the
constant-density Schwarzschild interior, which persists at every order
with both metric potentials keeping their Einstein form. The second has
no Einstein counterpart and is governed by the single dimension--order
parameter $k=(d-2N)/[4(1-N)]$; we obtain its physically admissible
orbit in explicit closed parametric form, both potentials included. A
phase-space analysis on the projective line shows that no solution of
this branch has a pressure-free boundary at finite radius, so only the
Schwarzschild branch can describe a bounded star. The second instead
loses all memory of its central data and relaxes onto a pure Lovelock
isothermal sphere with $\rho\propto r^{-2N}$, a higher-curvature
analogue of the singular isothermal halo, which we show to be the
attractor of the whole family, approached at the closed-form rate
$\lambda_*=-(d-2)/(d-2N)$ and with limiting equation of state fixed by
$(d,N)$ alone. We prove that $p/\rho$ increases monotonically outward
there, and that of the two regular-centre orientations only one is
admissible. Since $k$ is rational the spatial potential is algebraic of
degree $u+v$, where $2k=-u/v$; radical inversion is guaranteed for
degree at most four, while four representative cases have full
symmetric Galois group.
\end{abstract}

\keywords{Lovelock gravity \textperiodcentered Conformal flatness \textperiodcentered Exact solutions \textperiodcentered Isothermal attractor \textperiodcentered Higher-dimensional stellar models}

\maketitle

\section{Introduction}
\label{sec:intro}

The search for exact solutions in higher-dimensional gravitational
theories has intensified alongside the development of string-theoretic
and braneworld models. The idea that extra spatial dimensions may play
a dynamical role goes back to the unification programme of
Kaluza~\cite{Kaluza1921} and Klein~\cite{Klein1926}, and and in recent times has been considered in braneworld cosmology~\cite{Maartens2010} and in the
systematic construction of second-order gravitational actions, of
which the scalar--tensor classification of Horndeski~\cite{Horndeski1974}
is the four-dimensional paradigm. In theories admitting extra
dimensions the Einstein--Hilbert action is naturally extended to
include higher-order curvature invariants, among which the Lovelock
polynomial action~\cite{Lovelock1971} occupies a distinguished
position. The Lovelock equations retain second-order field equations
in the metric, avoid ghosts, and reduce to Einstein gravity when the
higher-order coupling constants are switched off. This combination of
properties makes Lovelock gravity the most natural geometric extension
of general relativity to higher dimensions.

Within Lovelock gravity the so-called \emph{pure Lovelock}
framework~\cite{Dadhich2010} retains a single term of order $N$ in the
Lovelock polynomial, with all other coupling constants set to zero.
Pure Lovelock gravity exhibits a number of properties that parallel
those of Einstein gravity~\cite{Dadhich2012,Dadhich2016a}, including a
Schwarzschild-type vacuum solution and thermodynamic analogues, and it
is therefore an ideal framework in which to test how structural results of
standard general relativity generalise to higher dimension and higher
curvature.

The structural result at issue here is the classical uniqueness
theorem: the only static, spherically symmetric, conformally flat
perfect-fluid solution of the Einstein equations is the constant-density
Schwarzschild interior. The result was obtained by
Buchdahl~\cite{Buchdahl1971} from the vanishing of the Weyl tensor,
proved in the degenerate Petrov classes by Barnes~\cite{Barnes1972},
extended to the axistationary case by Collinson~\cite{Collinson1976},
and given a self-contained proof by Raychaudhuri and
Maiti~\cite{Raychaudhuri1979}; when expansion is permitted the
conformally flat perfect fluids are the Stephani
universes~\cite{Stephani1967}. Conformal flatness imposes the vanishing
of the Weyl tensor, a condition whose expression in terms of the metric
potentials is the same in every dimension~\cite{PoncedeLeon1987}.
Combined with the pure Lovelock pressure isotropy condition, it
generates a single master ordinary differential equation governing all
such spacetimes.

That the classical uniqueness fails once higher-curvature terms are
switched on was established for Einstein--Gauss--Bonnet gravity
in~\cite{Hansraj2021}, where the conformally flat static isotropic
system was shown to admit, besides the Schwarzschild interior, a second
metric that does not carry constant density. That analysis was
confined to Einstein--Gauss--Bonnet gravity, that is $N=2$ with the
Einstein term retained, and to the low dimensions $d=5,6$. The present
paper treats the pure Lovelock theory at arbitrary order $N$ and in
every admissible dimension; what is new here is the exact factorisation
of the combined system at arbitrary order, the resulting global
two-branch theorem, and an explicit closed parametric form of both
metric potentials on the physically admissible generic orbit. Earlier studies of exact
solutions in pure Lovelock
gravity~\cite{Maharaj2015,Hansraj2015a,Chilambwe2015,Hansraj2017,Molina2017}
have relied on ad hoc choices of one metric potential, or on
algorithmic generation of solutions, without addressing completeness at
all.

Both branches turn out to be of physical interest, and for different
reasons. The first describes bounded stars. The second does not: it
carries no pressure-free surface at any finite radius, and instead
approaches a fluid sphere with $\rho\propto r^{-2N}$ and $p/\rho$
constant. Configurations of exactly this kind are familiar in
four-dimensional Einstein gravity, where the singular isothermal
sphere $\rho\propto r^{-2}$ is the standard idealisation of an extended
self-gravitating halo and the profile responsible for asymptotically
flat rotation curves~\cite{BinneyTremaine,Chandrasekhar1939}. The
isothermal sphere retains a distinguished status in Lovelock gravity,
where its universality across orders was established
in~\cite{Dadhich2016a}.

The distinction between that earlier result and the present one should
be stated plainly. Reference~\cite{Dadhich2016a} shows that isothermal
spheres are universal \emph{solutions} of pure Lovelock gravity,
persisting at every order with $p=\alpha\rho$ and $\rho\sim r^{-2N}$ for
$d\ge2N+2$. What is established here is that, within the complete
conformally flat isotropic class, that same configuration is the
\emph{global large-radius attractor} of the physically admissible
branch. It is therefore not one specially chosen exact solution among
many: every regular-centre member of the family forgets its central
data at large radius and flows to it, at a convergence rate
$\lambda_*=-(d-2)/(d-2N)$ obtained here in closed form. Conformal
flatness, in other words, does not merely admit the isothermal sphere;
it selects it.

We show that the solution set consists of exactly two branches: the
constant-density Schwarzschild interior, which persists unchanged at
every Lovelock order and remains the only branch able to carry a finite
stellar boundary, and
a genuinely higher-order one-parameter family with no Einstein-gravity
counterpart, obtained here in closed implicit form. We identify
the families for which inversion by radicals is guaranteed and
establish Galois obstructions in four representative higher-degree
cases, and we supply a parametric representation that renders the
physical generic solution analytically tractable for every admissible
pair $(d,N)$.

The paper is organised as follows. Section~\ref{sec:field} records the
pure Lovelock field equations for a static spherically symmetric
perfect fluid. Section~\ref{sec:factorisation} imposes conformal
flatness, exhibits the exact factorisation of the combined system, and
derives the master equation of the generic branch together with an
explicit formula for the pressure. Section~\ref{sec:completeness}
identifies the singular branch with the Schwarzschild interior, proves
completeness of the two-branch structure, establishes that the generic
potential is always an algebraic function of the radial variable,
identifies all families for which radical solvability is guaranteed by
degree, and dentifies Galois obstructions in four representative higher-degree cases, with the proofs collected in Appendix A.
Section~\ref{sec:dynamical} develops the phase-space analysis on the
projective line, from which the absence of a finite pressure-free
boundary, the large-$r$ decay law and the isothermal attractor all
follow. Section~\ref{sec:parametric} constructs the parametric
representation adapted to the two fixed points and renders all
thermodynamic quantities analytic. Section~\ref{sec:physical} analyses
the physical conditions and presents numerical results, and
Section~\ref{sec:discussion} concludes.

\section{Pure Lovelock field equations}
\label{sec:field}

The Lovelock Lagrangian~\cite{Lovelock1971} is
\begin{equation}
  \mathcal{L} = \sum_{N=0}^{\lfloor(d-1)/2\rfloor}
                \alpha_N\,\mathcal{R}^{(N)},
\end{equation}
where
\begin{equation}
  \mathcal{R}^{(N)} = \frac{1}{2^N}
  \delta^{\mu_1\nu_1\cdots\mu_N\nu_N}_{\alpha_1\beta_1\cdots\alpha_N\beta_N}
  \prod_{s=1}^{N} R^{\alpha_s\beta_s}{}_{\mu_s\nu_s},
\end{equation}
and $\delta^{\mu_1\nu_1\cdots\mu_N\nu_N}_{\alpha_1\beta_1\cdots\alpha_N\beta_N}
=\frac{1}{N!}\delta^{\mu_1}_{[\alpha_1}\cdots\delta^{\nu_N}_{\beta_N]}$
is the generalised Kronecker delta. Variation with respect to the
metric gives the equations of motion~\cite{Dadhich2010}
\begin{equation}
  \sum_{n=0}^{N}\alpha_n\,G^{(n)}_{AB} = T_{AB},
\end{equation}
with $G^{(n)}_{AB}=n[R^{(n)}_{AB}-\tfrac12 R^{(n)}g_{AB}]$. The
framework contains the cosmological constant ($N=0$), Einstein gravity
($N=1$) and Gauss--Bonnet gravity ($N=2$) as special cases. Lovelock
analogues of the Riemann tensor and their algebraic properties are
developed in~\cite{Camanho2015,Dadhich2017b}, related geometric
constructions in Gauss--Bonnet gravity appear in~\cite{Maeda2007}, and
the sense in which gravitational dynamics exhibits universality across
Lovelock orders is reviewed in~\cite{Dadhich2013}. In the pure Lovelock
framework a single coupling is retained, so that
\begin{equation}
  G^{(N)}_{AB} = T_{AB}.
\label{eq:purelovelock}
\end{equation}
We set $\alpha_N=1$ throughout. Since $\alpha_N$ carries dimensions of
$(\text{length})^{2N-2}$, this normalisation fixes the unit of length;
Lemma~\ref{lem:scaling} below shows that this is the only role it
plays.

We consider the static $d$-dimensional spherically symmetric line
element
\begin{equation}
  ds^2 = e^{\nu}dt^2 - e^{\lambda}dr^2 - r^2 d\Omega^2_{d-2},
\label{eq:metric}
\end{equation}
with $\nu=\nu(r)$ and $\lambda=\lambda(r)$, sourced by a neutral
perfect fluid with comoving velocity $u^a = e^{-\nu/2}\delta^a_0$ and
$T^a{}_b=\mathrm{diag}(-\rho,p_r,p_\perp,\dots,p_\perp)$. The energy
density and radial pressure follow from~\eqref{eq:purelovelock}
as~\cite{Molina2017}
\begin{align}
  \rho &= \frac{(d-2)!\,e^{-\lambda}(1-e^{-\lambda})^{N-1}}
               {2(d-2N-1)!\,r^{2N}}
          \bigl[Nr\lambda' + (d-2N-1)(e^{\lambda}-1)\bigr],
\label{eq:density}\\
  p_r &= \frac{(d-2)!\,e^{-\lambda}(1-e^{-\lambda})^{N-1}}
              {2(d-2N-1)!\,r^{2N}}
         \bigl[Nr\nu' - (d-2N-1)(e^{\lambda}-1)\bigr],
\label{eq:pressure}
\end{align}
primes denoting $d/dr$. Setting $N=1$ and $d=4$
returns the standard Einstein expressions. The factorials
in~\eqref{eq:density} and~\eqref{eq:pressure} require $d\ge2N+1$, which
is also the range in which the $N$th Lovelock term is dynamically
non-trivial; throughout what follows we therefore work on
\begin{equation}
  N\ge2,\qquad d\ge 2N+1,
\label{eq:domain}
\end{equation}
treating $N=1$ separately as the Einstein limit. The critical odd
dimension $d=2N+1$ is retained throughout, its distinctive behaviour
being derived where it arises; that pure Lovelock gravity admits no
bound distribution of finite radius in this dimension is already known
from~\cite{Dadhich2017}, and the statements below concerning $d=2N+1$
are consistent with, rather than independent of, that result. The conservation law
$T^{ab}{}_{;b}=0$ reads
\begin{equation}
  \tfrac12(p_r+\rho)\nu' + p_r' + \frac{d-2}{r}\left(p_r-p_\perp \right)=0,
\label{eq:conservation}
\end{equation}
so that the pressure isotropy condition $p_r=p_\perp\equiv p$ is
equivalent to $\tfrac12(p+\rho)\nu' + p'=0$ with $p$ given
by~\eqref{eq:pressure}.

It is advantageous to make use of the Buchdahl variables
\begin{equation}
  x = Cr^2,\qquad e^{-\lambda}=Z(x),\qquad e^{\nu}=y^2(x),
\label{eq:transform}
\end{equation}
with $C>0$ throughout, so that $x$ increases with the areal radius and
ranges over $(0,\infty)$; this sign choice is assumed wherever
$\tau=\ln x$ is used below and in statements such as the positivity
of~\eqref{eq:rhoconst} for $\beta<0$. In these variables
\eqref{eq:density}--\eqref{eq:pressure} and the isotropy condition
become
\begin{align}
  \rho &= \frac{C^N(d-2)!(1-Z)^{N-1}}{2(d-2N-1)!\,x^N}
          \bigl[(d-2N-1)(1-Z)-2Nx\dot{Z}\bigr],
\label{eq:rho_Z}\\
  p &= \frac{C^N(d-2)!(1-Z)^{N-1}}{2(d-2N-1)!\,x^N y}
       \bigl[4NxZ\dot{y}-(d-2N-1)(1-Z)y\bigr],
\label{eq:p_Z}
\end{align}
\begin{align}
  0 &= 4x^2Z(1-Z)\ddot{y} \nonumber\\
    &\quad + \bigl[4(1-N)xZ(1-Z)+2x^2(1-(2N-1)Z)\dot{Z}\bigr]\dot{y}
      \nonumber\\
    &\quad + (d-2N-1)(1-Z)\bigl[x\dot{Z}-Z+1\bigr]y,
\label{eq:isotropy_Z}
\end{align}
where dots denote $d/dx$. The decisive advantage
of~\eqref{eq:transform} is that~\eqref{eq:isotropy_Z} is \emph{linear}
in $y$. Setting $N=1$ recovers the $d$-dimensional Einstein isotropy
equation
\begin{equation}
  4x^2Z\ddot{y}+2x^2\dot{Z}\dot{y}+(d-3)(x\dot{Z}-Z+1)y=0.
\label{eq:einstein_iso}
\end{equation}

\section{Conformal flatness and the exact factorisation}
\label{sec:factorisation}

\subsection{The conformal flatness condition}

The vanishing of the Weyl tensor for the metric~\eqref{eq:metric} is
governed by a condition that, remarkably, is independent of the
spacetime dimension~\cite{PoncedeLeon1987}: the potentials must satisfy
\begin{equation}
  r^2(2\nu''+\nu'^2-\nu'\lambda')-r(\nu'-\lambda')-2(e^{\lambda}-1)=0,
\label{eq:conformal_r}
\end{equation}
with integrated form
\begin{equation}
  e^{\nu}=C_1r^2\cosh^2\!\Bigl(\int\frac{e^{\lambda/2}}{r}\,dr+C_2\Bigr).
\end{equation}
Under~\eqref{eq:transform}, equation~\eqref{eq:conformal_r} becomes
\begin{equation}
  4x^2Z\ddot{y}+2x^2\dot{Z}\dot{y}-(x\dot{Z}-Z+1)y=0,
\label{eq:conformal_Z}
\end{equation}
with general integral
\begin{equation}
  y = A\sqrt{x}\,\cosh\Bigl(\tfrac12\!\int\frac{dx}{x\sqrt{Z}}+B\Bigr).
\label{eq:y_integral}
\end{equation}
Writing $A\equiv x\dot{Z}-Z+1$, the two second-order
equations~\eqref{eq:einstein_iso} and~\eqref{eq:conformal_Z} share
their derivative terms and differ only in the undifferentiated one,
carrying $+(d-3)Ay$ and $-Ay$ respectively. Subtracting them gives
\begin{equation}
  (d-2)\,A\,y=0,
\label{eq:einstein_compat}
\end{equation}
so that $A=0$ for every $d>2$. This is the higher-dimensional
uniqueness theorem in one line, and is recovered below as
Corollary~\ref{cor:einstein}.

\subsection{The factorisation}

Multiplying~\eqref{eq:conformal_Z} by $(1-Z)$ and subtracting the
result from~\eqref{eq:isotropy_Z}, the second-derivative terms cancel
identically and what remains factorises completely,
\begin{equation}
  \bigl(x\dot{Z}-Z+1\bigr)
  \bigl[4(1-N)xZ\dot{y}+(d-2N)(1-Z)y\bigr]=0.
\label{eq:factorised}
\end{equation}
The solution set therefore splits into exactly two branches. The first
factor vanishes on the locus $x\dot{Z}=Z-1$, whose general solution is
$Z=1+\beta x$; we call this the \emph{singular branch} and identify it
with the constant-density Schwarzschild interior in
Theorem~\ref{thm:schwbranch}. On its complement --- the \emph{generic
branch} --- the second factor must vanish, giving
\begin{equation}
  \frac{\dot{y}}{y}=\frac{(d-2N)(Z-1)}{4(1-N)xZ}.
\label{eq:ydot_over_y}
\end{equation}
This is the key reduction: on the generic branch the logarithmic
derivative of the temporal potential is expressed in terms of $Z$
alone. At $N=1$ the second factor of~\eqref{eq:factorised} degenerates
to $(d-2)(1-Z)y=0$, so the generic branch is empty and the singular
branch exhausts the solution set; for $N\ge2$ both branches are
populated.

Substituting~\eqref{eq:ydot_over_y} into~\eqref{eq:p_Z} gives the
explicit pressure formula
\begin{equation}
  p = \frac{C^N(d-2)!\,(d-N-1)\,(1-Z)^N}{2(d-2N-1)!\,(N-1)\,x^N}.
\label{eq:pressure_explicit}
\end{equation}
This is a central structural result: \emph{on the generic branch the
pressure of every conformally flat isotropic pure Lovelock spacetime is
determined entirely by the spatial potential $Z$, independently of the
temporal potential $y$.} The decoupling parallels a known feature of
the Finch--Skea class~\cite{Finch1989,Hansraj2015b} in Einstein gravity.

\subsection{The master equation}

Differentiating~\eqref{eq:ydot_over_y} and eliminating $\dot y/y$ gives
\begin{equation}
  \frac{\ddot{y}}{y}
  = \frac{kx\dot{Z}-kZ(Z-1)+k^2(Z-1)^2}{x^2Z^2},
\label{eq:yddot_over_y}
\end{equation}
in which the entire dependence on $d$ and $N$ has collapsed into the
single \emph{dimension--order parameter}
\begin{equation}
  k \equiv \frac{d-2N}{4(1-N)}.
\label{eq:k_def}
\end{equation}
Substituting~\eqref{eq:ydot_over_y} and~\eqref{eq:yddot_over_y}
into~\eqref{eq:conformal_Z} and clearing a factor $Z$ yields the
\emph{master equation}
\begin{equation}
  \bigl[2k+(2k-1)Z\bigr]x\dot{Z}
  = (Z-1)\bigl[4k^2-(2k-1)^2Z\bigr].
\label{eq:master}
\end{equation}
This single first-order equation governs the spatial potential of every
generic-branch conformally flat isotropic pure Lovelock metric, for
every admissible $N\ge2$, $d\ge2N+1$, entirely through $k$. In canonical form it is an Abel
equation of the second kind; because its coefficients are constant it
is autonomous in $\tau=\ln x$ and hence separable, so no
Abel-specific machinery is required and the integration below succeeds
uniformly in $k$.

Throughout what follows we write
\begin{equation}
  L(Z)\equiv(2k-1)^2Z-4k^2 ,
\label{eq:Ldef}
\end{equation}
so that $L(Z^*)=0$ and, since $dL/dZ=(2k-1)^2>0$, $L>0$ strictly on the
physical arc $Z^*<Z<1$, where $Z-1<0$.

\begin{theorem}[General non-equilibrium solution]
\label{thm:general}
The general non-constant solution of~\eqref{eq:master} satisfies
\begin{equation}
  \frac{|Z-1|}{x}
  \left[\frac{x\,|L(Z)|}{|Z-1|}\right]^{2k}=c_1 ,
  \qquad c_1>0 ,
\label{eq:implicit_abs}
\end{equation}
and in particular, on the physical arc $Z^*<Z<1$,
\begin{equation}
  \frac{1-Z}{x}\left[\frac{x\,L(Z)}{1-Z}\right]^{2k}=c_1 ,
\label{eq:implicit}
\end{equation}
where $c_1$ is the constant of integration.
\end{theorem}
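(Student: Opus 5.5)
The plan is to exploit the autonomy of \eqref{eq:master} in $\tau=\ln x$ and integrate it by separation of variables. Away from the equilibria $Z=1$ and $Z=Z^*$ (where $L(Z^*)=0$), which are precisely the constant solutions excluded by the hypothesis ``non-constant'', the master equation can be written as
\begin{equation*}
  \frac{dx}{x}
  = -\,\frac{2k+(2k-1)Z}{(Z-1)\,L(Z)}\,dZ .
\end{equation*}
By uniqueness for this autonomous first-order equation, a non-constant solution never reaches an equilibrium at finite $\tau$. Hence $Z-1$ and $L(Z)$ keep fixed signs along any non-constant orbit, and the division is legitimate on the whole orbit.

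Before decomposing, I will record the non-degeneracy facts that hold throughout the domain \eqref{eq:domain}. There $k=(d-2N)/[4(1-N)]<0$, so $2k-1\neq0$ and $4k-1\neq0$. Consequently $L$ is genuinely linear in $Z$, and its root $Z^*=4k^2/(2k-1)^2$ differs from $1$, because $Z^*-1=(4k-1)/(2k-1)^2\neq0$.

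Next I will carry out the partial-fraction decomposition
\begin{equation*}
  \frac{2k+(2k-1)Z}{(Z-1)L(Z)}=\frac{a}{Z-1}+\frac{b}{L(Z)} .
\end{equation*}
The coefficients follow by evaluating residues. At $Z=1$ we have $L(1)=1-4k$ and numerator $4k-1$, so $a=-1$. At $Z=Z^*$ the numerator equals $2k(4k-1)/(2k-1)$; dividing by $Z^*-1$ gives $b=2k(2k-1)$.

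Integrating, using $\int b\,dZ/L=\frac{b}{(2k-1)^2}\ln|L|=\frac{2k}{2k-1}\ln|L|$, yields
\begin{equation*}
  \ln x=\ln|Z-1|-\frac{2k}{2k-1}\ln|L(Z)|+\text{const}.
\end{equation*}
I then exponentiate and raise both sides to the power $2k-1$, which is allowed since $2k-1\neq0$. This gives $x^{2k-1}|L|^{2k}|Z-1|^{1-2k}=c_1$. Regrouping as $|Z-1|x^{-1}\,\bigl(x|L|/|Z-1|\bigr)^{2k}$ gives \eqref{eq:implicit_abs}, and $c_1>0$ because it is the exponential of a real constant.

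For \eqref{eq:implicit}, on the arc $Z^*<Z<1$ we have $|Z-1|=1-Z$. Also $L>0$ there, as noted after \eqref{eq:Ldef}, so the absolute values can simply be removed.

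The main obstacle is not the integration itself but making the exclusions rigorous. First, the step from \eqref{eq:master} to the separated form needs $Z\neq0$ (the factor cleared when deriving \eqref{eq:master}) and a nonvanishing coefficient $2k+(2k-1)Z$. At a point where this coefficient vanishes but the right-hand side does not, $x\dot Z$ blows up. I will handle this by treating the separated relation as an equation for $x(Z)$, or equivalently $\tau(Z)$, which is smooth through such points. The implicit relation then remains valid as a relation between $x$ and $Z$ even where $Z(x)$ fails to be a function. Second, I will check by direct differentiation that \eqref{eq:implicit_abs} reproduces \eqref{eq:master}, which confirms that no spurious branch has been introduced.
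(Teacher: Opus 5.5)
Your proposal is correct and is essentially the paper's proof. Both separate variables in \eqref{eq:master}, decompose into partial fractions $1/(Z-1)+2k(2k-1)/[4k^2-(2k-1)^2Z]$ (your $a=-1$, $b=2k(2k-1)$ together with the overall minus sign), integrate to logarithms and exponentiate.

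Your additional care is fine but does not change the argument. Uniqueness fixes the signs of $Z-1$ and $L$ along non-constant orbits, and passing to $\tau(Z)$ handles the pole of $2k+(2k-1)Z$; both make explicit what the paper leaves implicit. The $Z\neq0$ caveat is unnecessary, since the theorem concerns solutions of \eqref{eq:master} itself.
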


\begin{proof}
Separating variables in~\eqref{eq:master},
\begin{equation}
  \frac{2k+(2k-1)Z}{(Z-1)\bigl[4k^2-(2k-1)^2Z\bigr]}\,dZ
  = \frac{dx}{x},
\label{eq:separated}
\end{equation}
and decomposing the left-hand side into partial fractions gives
\begin{equation}
  \frac{1}{Z-1}+\frac{2k(2k-1)}{4k^2-(2k-1)^2Z},
\end{equation}
which integrates to
\begin{equation}
  \ln|Z-1|-\frac{2k}{2k-1}\ln|L(Z)|
  = \ln|x|+\text{const}.
\end{equation}
Exponentiating and rearranging yields~\eqref{eq:implicit_abs}, the
absolute values being those generated by the logarithms. On the
physical arc $|Z-1|=1-Z$ and $|L|=L$, which gives~\eqref{eq:implicit}.
\qed
\end{proof}

The absolute values are important. On the physical arc
$x\,L(Z)/(Z-1)<0$, and $2k$ is non-integral for most admissible pairs
--- for instance $2k=-\tfrac12$ at $(N,d)=(2,5)$ and $2k=-\tfrac14$ at
$(3,7)$ --- so the unsigned combination would require a fractional power
of a negative quantity. Written as~\eqref{eq:implicit} the integral is
real and positive throughout the physical range, and is the form
consistent with the parametrisation of Sect.~\ref{sec:parametric},
where $s>0$.

The separation~\eqref{eq:separated} divides by the two factors of the
right-hand side of~\eqref{eq:master}, so~\eqref{eq:implicit_abs}
describes the non-equilibrium trajectories only. The equation also admits the two
constant solutions $Z\equiv1$ and $Z\equiv Z^*=4k^2/(2k-1)^2$, which are
the fixed points of the flow studied in Sect.~\ref{sec:dynamical} and
are not contained in~\eqref{eq:implicit_abs} for any finite non-zero
$c_1$. They are recorded explicitly in Theorem~\ref{thm:nogo}: the
first coincides with the $\beta=0$ member of the singular branch, and
the second, integrating~\eqref{eq:ydot_over_y} at constant $Z$, is the
exact power-law solution
\begin{equation}
  Z\equiv Z^*,\qquad y=y_0\,x^{(4k-1)/(4k)},
\label{eq:fixedsol}
\end{equation}
analysed in Sect.~\ref{ssec:isothermal}.

The constant $c_1$ labels the members of the family, but it carries no
invariant content, as the following observation shows.

\begin{lemma}[Scaling orbit]
\label{lem:scaling}
The regular non-constant family in $Z^*<Z<1$ is a single orbit of the
radial scaling group $x\mapsto\mu x$, $\mu>0$. Under this action $Z$ is
unchanged in value, while
\begin{equation}
  c_1\mapsto\mu^{\,2k-1}c_1,\qquad
  (\rho,p)\mapsto\mu^{-N}(\rho,p).
\label{eq:scalingweights}
\end{equation}
Since $2k-1=(d-2)/[2(1-N)]\neq0$ for every $d>2$, distinct rescalings produce distinct values of \(c_1\), and the scaling group acts simply transitively on the family. The
equilibria of Theorem~\ref{thm:general} are fixed points of the action
and are not members of the orbit.
\end{lemma}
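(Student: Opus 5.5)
The plan rests on the autonomy of the master equation~\eqref{eq:master}: it contains $x$ only through the combination $x\dot Z = dZ/d\tau$, $\tau=\ln x$. First I check invariance. If $Z(x)$ solves~\eqref{eq:master}, set $\tilde Z(x)=Z(\mu^{-1}x)$. Then $x\,d\tilde Z/dx=(\mu^{-1}x)\dot Z(\mu^{-1}x)$, so $\tilde Z$ solves the same equation. In $\tau$ this is just the translation $\tau\mapsto\tau+\ln\mu$, and $Z$ is carried along unchanged in value. The same observation shows that the equilibria $Z\equiv1$ and $Z\equiv Z^*$ are fixed pointwise by the action. Since they are excluded from~\eqref{eq:implicit_abs} for every finite nonzero $c_1$ (as noted after Theorem~\ref{thm:general}), they are not members of the orbit.

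Next I compute the weight of $c_1$. Substitute $\tilde Z$ into~\eqref{eq:implicit}, evaluating at the point $x=\mu x'$ where $\tilde Z=Z(x')$. The left side becomes
\[
\frac{1-Z(x')}{\mu x'}\left[\frac{\mu x' L(Z(x'))}{1-Z(x')}\right]^{2k}=\mu^{2k-1}c_1 .
\]
This is the stated law. For the density and pressure, \eqref{eq:rho_Z} and~\eqref{eq:pressure_explicit} show that $\rho$ and $p$ are $C^N x^{-N}$ times functions of $Z$ and $x\dot Z$, both of which are scale-invariant. Hence $(\rho,p)\mapsto\mu^{-N}(\rho,p)$ at corresponding points. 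Equivalently, the action is a change of $C$ at fixed $r$, which makes the claim in Sect.~\ref{sec:field} about $\alpha_N$ fixing only the unit of length consistent.

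For transitivity, I use the fact that every non-constant trajectory in $Z^*<Z<1$ is one of the curves~\eqref{eq:implicit} with some $c_1>0$. By separability, each such solution is $Z=F(\tau-\tau_0)$, where $F$ is the unique maximal $\tau$-solution on the arc. The arc between two consecutive zeros of the right-hand side of~\eqref{eq:master} is a single phase-space orbit, provided the coefficient $2k+(2k-1)Z$ does not vanish on it; I must verify this last condition. Given it, any two members differ by a $\tau$-translation, i.e.\ by some $\mu$. Concretely, given $c_1,c_1'>0$, I choose $\mu=(c_1'/c_1)^{1/(2k-1)}$. This is well defined because $2k-1=\frac{d-2N-2+2N}{2(1-N)}=\frac{d-2}{2(1-N)}\neq0$. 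The same map also shows that $\mu\mapsto\mu^{2k-1}c_1$ is a bijection of $(0,\infty)$, so the stabiliser is trivial and the action is simply transitive.

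The main obstacle is showing that $c_1$ determines the solution uniquely, i.e.\ that each level set of~\eqref{eq:implicit} meets the arc in exactly one trajectory. I would handle this by checking that the left side of~\eqref{eq:implicit} is strictly monotone in $Z$ at fixed $x$, or equivalently that $2k+(2k-1)Z$ has no zero in $(Z^*,1)$. Since $k<0$, the zero lies at $Z=-2k/(2k-1)=\sqrt{Z^*}$ up to sign considerations. I would compare this value with the endpoints $Z^*$ and $1$ directly. If the zero did fall inside the arc, I would restrict the claim to the portion connected to the regular centre, and this is where I would expect the word "regular" in the statement to be used.
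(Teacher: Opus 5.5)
Your proposal follows the paper's proof: autonomy of the master equation in $\tau=\ln x$, substitution into the implicit integral to get the weight $\mu^{2k-1}$, and scale-invariance of $Z$ and $x\dot Z$, so that only the explicit $x^{-N}$ factor rescales $\rho$ and $p$. Your phase-line argument for transitivity supplies a step the paper leaves implicit. The one check you left open closes at once: for $k<0$ and $Z>0$ both terms of $2k+(2k-1)Z$ are negative, so its zero lies at $Z=-2k/(2k-1)=-\sqrt{Z^*}<0$ (not $+\sqrt{Z^*}$). Hence the arc $(Z^*,1)$ is a single orbit, $c_1$ is monotone in $Z$ there, and every member is automatically regular at the centre.
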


\begin{proof}
Equation~\eqref{eq:master} involves $x$ only through the combination
$x\dot{Z}=dZ/d\tau$, $\tau=\ln x$, and is therefore invariant under
$x\mapsto\mu x$; if $Z(x)$ is a solution then so is $Z(x/\mu)$.
Substituting $x\mapsto\mu x$ at fixed $Z$ into~\eqref{eq:implicit_abs}
multiplies the left-hand side by $\mu^{2k-1}$, and the same
substitution in~\eqref{eq:rho_Z} and~\eqref{eq:pressure_explicit},
whose only explicit $x$ dependence is the factor $x^{-N}$, multiplies
$\rho$ and $p$ by $\mu^{-N}$. Since \(2k-1\neq0\), no non-trivial rescaling leaves \(c_1\) unchanged. \qed
\end{proof}

The regular physical family therefore possesses a universal
dimensionless profile, the integration constant fixing only the radial
and density scales. In the notation of Sect.~\ref{ssec:central}, where
$1-Z=ax+\mathcal{O}(x^2)$ near the centre, the transformed solution
$Z(x/\mu)$ has $1-Z(x/\mu)=(a/\mu)x+\mathcal{O}(x^2)$, so the group
acts by
\begin{equation}
  a\longmapsto a/\mu ,
\label{eq:aaction}
\end{equation}
consistently with $\rho_c\propto a^N$ and the weight $\mu^{-N}$
in~\eqref{eq:scalingweights}. The solutions satisfy $Z_a(x)=Z_1(ax)$,
so a single numerical integration determines the family for each
$(d,N)$ up to this rescaling. The transformation is a homothety rather than an isometry:
the central density~\eqref{eq:centralvals} does change under it once a
physical length normalisation has been fixed, and the dimensionless
ratios of Sect.~\ref{ssec:central} are what the lemma renders
universal.

\section{The two branches, completeness and solvability}
\label{sec:completeness}

\subsection{The Schwarzschild interior branch}

\begin{corollary}[Einstein limit]
\label{cor:einstein}
For $N=1$ the parameter $k$ is undefined. Solving
\eqref{eq:isotropy_Z} and~\eqref{eq:conformal_Z} simultaneously at
$N=1$ gives $Z=1+\beta x$ with $y=a\sqrt{1+\beta x}+b$ for
$\beta\neq0$, the Schwarzschild interior solution, together with the
degenerate case $\beta=0$, $y=ax+b$. The latter is not in general a
vacuum: at $N=1$ the factor $(1-Z)^{N-1}$ of~\eqref{eq:rho_Z}
and~\eqref{eq:p_Z} is unity rather than zero, and $Z\equiv1$ gives
\begin{equation}
  \rho=0,\qquad p=\frac{2C(d-2)\,a}{ax+b},
\label{eq:N1degenerate}
\end{equation}
so this configuration carries pressure without density unless $a=0$;
only the constant-lapse member $a=0$ is a genuine vacuum. For $N\ge2$
the factor $(1-Z)^{N-1}$ vanishes at $Z\equiv1$ and both $\rho$ and $p$
are zero, which is the case recorded in
Theorem~\ref{thm:schwbranch}. Up to this degenerate zero-density
solution, the Schwarzschild interior is the unique conformally flat
isotropic solution in Einstein gravity in every dimension $d$.
\end{corollary}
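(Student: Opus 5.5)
The plan is to work directly at $N=1$ with the two linear second-order equations in $y$: the Einstein isotropy equation~\eqref{eq:einstein_iso}, which is~\eqref{eq:isotropy_Z} at $N=1$, and the conformal flatness equation~\eqref{eq:conformal_Z}. The factorisation~\eqref{eq:factorised} is not the right tool here. At $N=1$ its second factor becomes $(d-2)(1-Z)y$, and the generic-branch parameter $k$ of~\eqref{eq:k_def} has a vanishing denominator, so it is undefined. Instead I will subtract the two equations exactly as in~\eqref{eq:einstein_compat}. Since they share the terms $4x^2Z\ddot y+2x^2\dot Z\dot y$, the difference is $(d-2)(x\dot Z-Z+1)y=0$. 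A physical lapse has $y\neq0$, so for $d>2$ we get $x\dot Z=Z-1$. Its general solution is $Z=1+\beta x$, since $d\bigl((Z-1)/x\bigr)/dx=0$.

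Next I substitute $Z=1+\beta x$ back into~\eqref{eq:conformal_Z}. The undifferentiated coefficient $x\dot Z-Z+1$ now vanishes identically. This makes~\eqref{eq:einstein_iso} and~\eqref{eq:conformal_Z} coincide, so isotropy imposes nothing further. What remains is $4x^2(1+\beta x)\ddot y+2\beta x^2\dot y=0$, that is, $2(1+\beta x)\ddot y+\beta\dot y=0$. This is first order in $\dot y$ and integrates to $\dot y\propto(1+\beta x)^{-1/2}$.
\begin{itemize}
\item For $\beta\neq0$ a second quadrature gives $y=a\sqrt{1+\beta x}+b$, after absorbing the factor $2/\beta$ into $a$.
\item For $\beta=0$ the equation reduces to $\ddot y=0$, so $y=ax+b$.
\end{itemize}

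To justify calling the $\beta\neq0$ solution the Schwarzschild interior, I will evaluate~\eqref{eq:rho_Z} at $N=1$. There the prefactor is $C(d-2)/(2x)$, and the bracket is $(d-3)(-\beta x)-2\beta x=-(d-1)\beta x$. This gives $\rho=-C(d-2)(d-1)\beta/2$, which is constant and is positive precisely when $\beta<0$, matching the remark after~\eqref{eq:transform}.

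For the degenerate case $Z\equiv1$ I use~\eqref{eq:rho_Z} and~\eqref{eq:p_Z} at $N=1$.
\begin{itemize}
\item The factor $(1-Z)^{N-1}$ equals $1$. The density bracket vanishes, so $\rho=0$.
\item The pressure is $p=\frac{C(d-2)}{2xy}\,4x\dot y=2C(d-2)a/(ax+b)$, which is~\eqref{eq:N1degenerate}. It vanishes only when $a=0$, so only the constant-lapse member is a genuine vacuum.
\end{itemize}
For $N\ge2$ the factor $(1-Z)^{N-1}$ is zero at $Z=1$, so both $\rho$ and $p$ vanish.

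Uniqueness up to the degenerate zero-density case then follows because every solution was forced through $Z=1+\beta x$. No step is genuinely hard. The only point needing care is not dividing by $\beta$ in the quadrature, which is why $\beta=0$ is handled separately. I will also note that the derivation is valid for every $d>2$, so the conclusion holds in every dimension.
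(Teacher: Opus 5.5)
Your proposal is correct and follows the paper's own argument. The subtraction \eqref{eq:einstein_compat} forces $x\dot Z=Z-1$, hence $Z=1+\beta x$, after which conformal flatness collapses to $2(1+\beta x)\ddot y+\beta\dot y=0$ exactly as in the proof of Theorem~\ref{thm:schwbranch}, and your evaluations of $\rho$ and $p$ at $N=1$ reproduce \eqref{eq:rhoconst} and \eqref{eq:N1degenerate}.
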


\begin{theorem}[The singular branch]
\label{thm:schwbranch}
For every admissible pair $N\ge2$, $d\ge2N+1$ of~\eqref{eq:domain},
the singular branch consists of $Z=1+\beta x$ together with
\begin{equation}
  y=\begin{cases}
      a\sqrt{1+\beta x}+b, & \beta\neq0,\\[1mm]
      a\,x+b,               & \beta=0,
    \end{cases}
\label{eq:schwbranch}
\end{equation}
both cases satisfying the isotropy condition~\eqref{eq:isotropy_Z} and
the conformal flatness condition~\eqref{eq:conformal_Z}. For
$\beta\neq0$ this is the constant-density Schwarzschild interior, with
\begin{equation}
  \rho=\frac{(d-1)!\,(-\beta C)^N}{2(d-2N-1)!},
\label{eq:rhoconst}
\end{equation}
positive for $\beta<0$. For $\beta=0$ the spatial sections are flat,
$Z\equiv1$, and since $N\ge2$ the factor $(1-Z)^{N-1}$
in~\eqref{eq:rho_Z} and~\eqref{eq:p_Z} gives $\rho=p=0$: a degenerate
conformally flat pure Lovelock vacuum. Outside the stated domain, at
$N=1$, that factor is unity instead and the same potentials give
$\rho=0$ with $p\neq0$; see Corollary~\ref{cor:einstein}.
\end{theorem}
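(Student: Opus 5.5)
The plan is to work directly in the Buchdahl variables and use the factorisation~\eqref{eq:factorised}. The singular branch is defined by $x\dot Z - Z + 1=0$. Dividing by $x^2$ turns this into $\frac{d}{dx}\bigl[(Z-1)/x\bigr]=0$, so $Z=1+\beta x$ with $\beta$ constant. Because $Z$ then satisfies $A\equiv x\dot Z-Z+1=0$, the undifferentiated term of~\eqref{eq:conformal_Z} drops out, and by construction of~\eqref{eq:factorised} the isotropy equation~\eqref{eq:isotropy_Z} reduces to $(1-Z)$ times~\eqref{eq:conformal_Z}. It therefore suffices to solve the single linear equation $4x^2Z\ddot y+2x^2\dot Z\dot y=0$ with $Z=1+\beta x$. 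For completeness I will also substitute directly into~\eqref{eq:isotropy_Z}. Its $y$-coefficient carries the factor $A=0$. Since $x\dot Z=\beta x=Z-1$, the $\dot y$-coefficient becomes $2x(1-Z)\bigl[2(1-N)Z-(1-(2N-1)Z)\bigr]=-2x(1-Z)(1+Z)$. After dividing by $2x(1-Z)$, this yields the same equation $2xZ\ddot y+\beta x\dot y=0$ on $Z\neq1$.

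Next I solve for $y$. Writing $w=\dot y$ gives $\dot w/w=-\beta/(2(1+\beta x))$, so $w\propto(1+\beta x)^{-1/2}$. For $\beta\neq0$ this integrates to $y=a\sqrt{1+\beta x}+b$. For $\beta=0$ the equation is $\ddot y=0$, so $y=ax+b$. When $\beta=0$ the reduced isotropy equation is satisfied trivially, because $(1-Z)\equiv0$ multiplies everything, but conformal flatness~\eqref{eq:conformal_Z} still forces $4x^2\ddot y=0$. This is the same result, and it is the one point where I must be careful not to divide by $1-Z$. To confirm both conditions, I will substitute back: $\sqrt{Z}$ together with~\eqref{eq:y_integral} reproduces the $\cosh$ form, but the direct check above is enough.

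Finally I compute the density from~\eqref{eq:rho_Z}. With $1-Z=-\beta x$ and $x\dot Z=\beta x$, the bracket is $(d-2N-1)(-\beta x)-2N\beta x=-(d-1)\beta x$. Multiplying by $(1-Z)^{N-1}=(-\beta x)^{N-1}$ gives $(d-1)(-\beta x)^N$. The powers of $x$ cancel against $x^{-N}$, and $(d-1)(d-2)!=(d-1)!$, which yields~\eqref{eq:rhoconst}. This is positive exactly when $-\beta C>0$, that is $\beta<0$ since $C>0$. For $\beta=0$ the prefactor $(1-Z)^{N-1}$ vanishes for $N\ge2$, so~\eqref{eq:rho_Z} and~\eqref{eq:p_Z} give $\rho=p=0$. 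The $N=1$ remark follows by citing Corollary~\ref{cor:einstein}.

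The only delicate step is the bookkeeping in the $\dot y$-coefficient of~\eqref{eq:isotropy_Z} together with the degenerate division at $Z\equiv1$. I will handle it by treating $\beta=0$ separately via~\eqref{eq:conformal_Z} alone.
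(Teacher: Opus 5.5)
Your proposal is correct and follows essentially the paper's proof: on $Z=1+\beta x$ the undifferentiated $y$-terms vanish, and for $\beta\neq0$ both \eqref{eq:isotropy_Z} and \eqref{eq:conformal_Z} collapse to $2(1+\beta x)\ddot y+\beta\dot y=0$; the case $\beta=0$ is settled by \eqref{eq:conformal_Z} alone, and \eqref{eq:rho_Z} then gives \eqref{eq:rhoconst}. Using \eqref{eq:factorised} to read off isotropy $=(1-Z)\times$\eqref{eq:conformal_Z} on $A=0$ is simply a shortcut for the paper's explicit coefficient computation. There are two minor slips in your write-up. First, the bracket $2(1-N)Z-\bigl(1-(2N-1)Z\bigr)$ equals $Z-1$, so the $\dot y$-coefficient in your direct check is $-2x(1-Z)^2=-2\beta^2x^3$, not $-2x(1-Z)(1+Z)$; only the former yields the equation you state. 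Second, $(-\beta C)^N$ is positive for $\beta<0$, but when $N$ is even it is not positive \emph{exactly} then.
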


\begin{proof}
On the ansatz $Z=1+\beta x$ the factor $x\dot{Z}-Z+1$ vanishes
identically, so the undifferentiated $y$-terms of
both~\eqref{eq:isotropy_Z} and~\eqref{eq:conformal_Z} drop out, and
\eqref{eq:conformal_Z} reduces to $2(1+\beta x)\ddot{y}+\beta\dot{y}=0$.
In~\eqref{eq:isotropy_Z} the $\ddot{y}$ coefficient becomes
$-4\beta x^3(1+\beta x)$ while the $\dot{y}$ coefficient is
\begin{equation}
  4(1-N)x(1+\beta x)(-\beta x)
  +2x^2\beta\bigl[(2-2N)-(2N-1)\beta x\bigr]=-2\beta^2x^3,
\end{equation}
the constant and $\mathcal{O}(\beta x)$ parts cancelling identically, so
that~\eqref{eq:isotropy_Z} becomes
$-2\beta x^3\bigl[2(1+\beta x)\ddot{y}+\beta\dot{y}\bigr]=0$.

For $\beta\neq0$ both conditions therefore collapse to the same equation
$2(1+\beta x)\ddot{y}+\beta\dot{y}=0$, independently of $d$ and $N$,
whose general integral is the first line of~\eqref{eq:schwbranch};
substitution into~\eqref{eq:rho_Z} gives~\eqref{eq:rhoconst}.

For $\beta=0$ the two conditions decouple. Every term
of~\eqref{eq:isotropy_Z} carries a factor $(1-Z)$ or $\dot{Z}$, both of
which vanish identically at $Z\equiv1$, so isotropy is satisfied for
\emph{any} $y$; the sole surviving constraint is
$4x^2\ddot{y}=0$ from~\eqref{eq:conformal_Z}, with general integral
$y=ax+b$. Since $1-Z\equiv0$, equations~\eqref{eq:rho_Z}
and~\eqref{eq:p_Z} give $\rho=p=0$ for $N\ge2$. \qed
\end{proof}

At $Z\equiv1$ the first factor of~\eqref{eq:factorised} vanishes
identically while the second reduces to $4(1-N)x\dot{y}$; the whole
family $y=ax+b$ therefore lies on the singular branch, but only its
constant-lapse member $a=0$ annihilates the second factor as well and
so belongs to both. These solutions carry no matter and are of no
stellar interest, but we mention them for completeness.

\begin{corollary}[Two-branch structure]
\label{cor:twobranch}
The solution set of the system~\eqref{eq:isotropy_Z},
\eqref{eq:conformal_Z} is the union of the singular branch of
Theorem~\ref{thm:schwbranch} and the generic branch, the latter
comprising the non-constant family~\eqref{eq:implicit_abs} together
with the two equilibria of Theorem~\ref{thm:general}. The non-constant
generic family is disjoint from the singular branch for $N\ge2$:
substituting $Z=1+\beta x$ into~\eqref{eq:master} leaves the residual
$2k(2k-1)\beta^2x^2$, which vanishes only for $k=0$ (that is $d=2N$,
below the critical dimension), $k=\tfrac12$ (that is $d=2$), or
$\beta=0$. The excluded case $\beta=0$ is precisely the flat vacuum
$Z\equiv1$, which is the one solution the two branches share. For $N=1$
the generic branch is empty and the singular branch alone exhausts the
solution set.
\end{corollary}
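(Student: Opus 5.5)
The plan is to work pointwise on the open set where the first factor of the exact factorisation~\eqref{eq:factorised} does or does not vanish, and then patch the pieces together by regularity. Let $(Z,y)$ be any solution of~\eqref{eq:isotropy_Z} and~\eqref{eq:conformal_Z} on an interval, with $y\neq0$ since $e^{\nu}=y^2$. Because~\eqref{eq:factorised} is an algebraic consequence of the two equations, one of two things holds at each point. Either $A\equiv x\dot Z-Z+1=0$, or the bracket $4(1-N)xZ\dot y+(d-2N)(1-Z)y$ vanishes. Let $U$ be the open set where $A\neq0$.

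\textbf{Case $U$ empty.} If $U$ is empty, then $A\equiv0$ on the whole interval, so $Z=1+\beta x$. Theorem~\ref{thm:schwbranch} then gives exactly the singular branch.

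\textbf{Case $U$ non-empty.} On each component of $U$ the bracket vanishes, which gives~\eqref{eq:ydot_over_y}. Differentiating it yields~\eqref{eq:yddot_over_y}, and substituting into~\eqref{eq:conformal_Z} gives the master equation~\eqref{eq:master}. Theorem~\ref{thm:general}, together with the constant solutions $Z\equiv1$ and $Z\equiv Z^*$, then classifies $Z$ on that component. In each case $y$ is recovered by quadrature from~\eqref{eq:ydot_over_y}.

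\textbf{Closure of $U$.} I must show that $U$ is the whole interval. Since~\eqref{eq:master} is autonomous in $\tau=\ln x$, a solution that is non-constant on a subinterval stays non-constant wherever it is defined, and it is real-analytic in $\tau$ away from the zeros of $2k+(2k-1)Z$. Hence $A$, computed along such a solution, is analytic and not identically zero. Its zeros are therefore isolated, and across them continuity of $Z,\dot Z$ extends the same master-equation solution.

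Next I would check that $A$ cannot vanish identically on an adjacent interval. The first-order equation $A=0$ would force $Z=1+\beta x$ there, while the master equation holds on the neighbouring set. Matching $Z$ and $\dot Z$ at the junction is possible only if the linear function is also a solution of~\eqref{eq:master}. The residual computation in Corollary~\ref{cor:twobranch} shows this happens only for $\beta=0$, i.e.\ $Z\equiv1$. At $Z\equiv1$ both branches share the vacuum, so no genuine hybrid solution exists. Thus on each interval the solution lies entirely in one branch.

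\textbf{Disjointness, and the case $N=1$.} For disjointness, substitute $Z=1+\beta x$ into~\eqref{eq:master}. The left side is $[2k+(2k-1)(1+\beta x)]\beta x$ and the right side is $\beta x[4k^2-(2k-1)^2(1+\beta x)]$. Their difference is $\beta x\,[\,(4k-1)+(2k-1)\beta x-4k^2+(2k-1)^2+(2k-1)^2\beta x\,]$. The constant part is $4k-1-4k^2+4k^2-4k+1=0$. The coefficient of $\beta x$ is $(2k-1)(2k)$. The residual is therefore $2k(2k-1)\beta^2x^2$, as claimed. With $k=(d-2N)/[4(1-N)]$, it vanishes only for $d=2N$, $d=2$, or $\beta=0$, and the first two are excluded by~\eqref{eq:domain}.

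For $N=1$ the bracket reduces to $(d-2)(1-Z)y$. Since $y\neq0$, on the set where $A\neq0$ this forces $Z\equiv1$, which has $A=0$, a contradiction. So $U$ is empty and only the singular branch remains, consistent with Corollary~\ref{cor:einstein}.

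\textbf{Main obstacle.} I expect the closure step to be the hardest: ruling out solutions that switch branches at an accumulation point of zeros of $A$ or at points where $2k+(2k-1)Z=0$. I would handle it by the analytic-continuation argument in $\tau$ given above. If needed, I would instead use uniqueness for the second-order linear equation~\eqref{eq:conformal_Z} in $y$, given $Z$, to propagate the branch identity.
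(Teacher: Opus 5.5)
Your disjointness computation and the $N=1$ argument are correct, and working with the open set $U=\{A\neq0\}$ is a sound start, but the closure step has a genuine gap. The assertion that ``continuity of $Z,\dot Z$ extends the same master-equation solution'' across a zero of $A$ assumes what has to be proved. At such a point both factors of~\eqref{eq:factorised} vanish, and continuity of $(Z,\dot Z)$ does not determine which factor the solution annihilates beyond it. Your adjacent-interval argument excludes only one way of switching. The accumulation case you flag is not settled by analyticity either. The system \eqref{eq:isotropy_Z}, \eqref{eq:conformal_Z} is not in normal form, so its solutions are not known a priori to be analytic across the zero set of $A$. Moreover, distinct components of $U$ could a priori carry distinct members of~\eqref{eq:implicit_abs}.

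The missing idea is the one in Lemma~\ref{lem:noswitch}. On the generic branch $A$ is an explicit function of $Z$, namely \eqref{eq:Aformula}, and for $k<0$ it vanishes only at $Z=1$, a fixed point of~\eqref{eq:autonomous}. You already have this algebra: your residual $2k(2k-1)\beta^2x^2$ is $2k(2k-1)(Z-1)^2$ with $Z-1=\beta x$. Use it pointwise rather than on an interval.

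Let $x_0$ be an endpoint, interior to the domain, of a component of $U$. Continuity of $Z$ and $\dot Z$ carries~\eqref{eq:master} to $x_0$. Combining it with $x_0\dot Z(x_0)=Z(x_0)-1$ gives $2k(2k-1)\bigl(Z(x_0)-1\bigr)^2=0$, so $Z(x_0)=1$. Because $2k+(2k-1)\cdot1=4k-1\neq0$, the flow is smooth at $Z=1$. Uniqueness then forces $Z\equiv1$ on that component, contradicting $A\neq0$ there.

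Hence $U$ is closed as well as open, and on a connected domain it is empty or everything. This disposes of isolated zeros, adjacent singular intervals and accumulation points at once, and the analyticity argument becomes unnecessary. The same uniqueness step is what your adjacent-interval case actually needs. There, $\beta=0$ only gives $Z(x_0)=1$ on the generic side. ``Both branches share the vacuum'' becomes a proof only once uniqueness at the fixed point turns that into a contradiction.
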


Theorem~\ref{thm:schwbranch} relates the analysis directly with the universality programme of Dadhich and
collaborators~\cite{Dadhich2010b,Dadhich2016a,Dadhich2017}, in which a
Schwarzschild-type vacuum solution, isothermal fluid spheres and the
uniform-density interior all persist across Lovelock orders. The
theorem extends that persistence to conformal geometry: the
Schwarzschild interior remains conformally flat, with both metric
potentials retaining their Einstein-gravity functional form, at every
order $N$. What is genuinely new at higher order is the coexisting
generic family, which has no Einstein counterpart. In Einstein gravity
constant density and conformal flatness select the same solution,
unique up to the degenerate zero-density configuration of
Corollary~\ref{cor:einstein}; for $N\ge2$ they continue to intersect on
the Schwarzschild interior, but conformal flatness now admits in addition a
one-parameter continuum.

\subsection{Completeness}

The factorised identity~\eqref{eq:factorised} is an exact consequence
of the system~\eqref{eq:isotropy_Z},~\eqref{eq:conformal_Z}, so at
every point at least one of its two factors vanishes. That alone does
not preclude a solution which annihilates the first factor on part of
its domain and the second on the rest, and the following lemma closes
this gap.

\begin{lemma}[No branch switching]
\label{lem:noswitch}
A generic-branch trajectory cannot meet the singular locus
$A\equiv x\dot{Z}-Z+1=0$ at any finite $\tau=\ln x$. The two branches
are therefore separated, and each maximal solution lies entirely on one
of them.
\end{lemma}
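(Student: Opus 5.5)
The plan is to evaluate the singular-locus function $A=x\dot Z-Z+1$ along an arbitrary generic-branch trajectory, using the master equation~\eqref{eq:master} to eliminate $x\dot Z$. Then I would use uniqueness for the autonomous flow in $\tau=\ln x$ to show that $A$ cannot reach zero at finite $\tau$.

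\textbf{Step 1: $A$ on the generic branch.} On the generic branch $Z$ satisfies~\eqref{eq:master}. Wherever $2k+(2k-1)Z\neq0$ this gives
\begin{equation*}
A=(Z-1)\left[\frac{4k^2-(2k-1)^2Z}{2k+(2k-1)Z}-1\right]
 =-\frac{2k(2k-1)(Z-1)^2}{2k+(2k-1)Z}.
\end{equation*}
The numerator simplifies because $4k^2-2k=2k(2k-1)$ and $(2k-1)^2Z+(2k-1)Z=2k(2k-1)Z$. On the admissible domain~\eqref{eq:domain} we have $k\neq0$, and also $2k-1=(d-2)/[2(1-N)]\neq0$. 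Hence $A$ vanishes only at $Z=1$. At the pole $Z=-2k/(2k-1)$ the quantity $|A|$ diverges rather than vanishes. This is the same residual $2k(2k-1)$ already seen in Corollary~\ref{cor:twobranch}.

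\textbf{Step 2: $Z=1$ is not reached at finite $\tau$.} In $\tau$ the master equation reads $dZ/d\tau=F(Z)$, where
\begin{equation*}
F(Z)=\frac{(Z-1)\left[4k^2-(2k-1)^2Z\right]}{2k+(2k-1)Z}.
\end{equation*}
Near $Z=1$ the function $F$ is rational and smooth, because its denominator there equals $4k-1$, which is nonzero since $k<0$. Also $F(1)=0$, so $Z\equiv1$ is an equilibrium. By Picard--Lindel\"of uniqueness, a non-constant trajectory cannot attain $Z=1$ at any finite $\tau$. As a cross-check, near $Z=1$ the implicit integral~\eqref{eq:implicit_abs} forces $\tau\to\pm\infty$, since $\ln|Z-1|$ diverges. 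Combined with Step~1, $A\neq0$ at every finite $\tau$ on any non-constant generic trajectory. On the equilibrium $Z\equiv Z^*$, $A=1-Z^*\neq0$ unless $Z^*=1$, which is excluded because $4k-1\neq0$.

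\textbf{Step 3: no splicing.} Take a $C^1$ maximal solution that lies on the singular branch on one interval and on the generic branch on an adjacent interval, and let $x_0$ be the junction point. On the singular side $A\equiv0$, so by continuity $A(x_0)=0$. On the generic side, letting $x\to x_0^+$ in the formula of Step~1 forces $Z(x_0)=1$. The pole cannot intervene, since near it $|A|\to\infty$. Uniqueness at the equilibrium then forces the generic piece to be $Z\equiv1$. This piece is the flat member $\beta=0$ of the singular branch, the one solution the two branches share, so no genuine switch occurs.

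\textbf{Main obstacle.} I expect the delicate point to be the regularity bookkeeping. First, one must confirm that the open set on which $A\neq0$ is the set on which~\eqref{eq:master} actually holds. Second, one must treat the pole of $F$, where $x\dot Z$ blows up, so that it is not confused with a zero of $A$.
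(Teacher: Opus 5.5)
Your proposal is correct and follows the paper's proof. You use the same elimination of $x\dot Z$ via the master equation, giving $A=-2k(2k-1)(Z-1)^2/[2k+(2k-1)Z]$, which vanishes only at $Z=1$, and the same uniqueness argument at the equilibrium $Z=1$, where, as you note, $Q(1)=4k-1\neq0$ makes the flow Lipschitz.

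Your Step~3 usefully spells out the no-splicing argument that the paper leaves implicit in \emph{each maximal solution lies entirely on one of them}. It is cleanest phrased at an arbitrary boundary point of the open set $\{A\neq0\}$ rather than for two adjacent intervals, since that version also covers zero sets of $A$ that are not finite unions of intervals.
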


\begin{proof}
On the generic branch $x\dot{Z}$ is given by the master
equation~\eqref{eq:master}, whence
\begin{equation}
  A = x\dot{Z}-Z+1
    = -\frac{2k(2k-1)(Z-1)^2}{2k+(2k-1)Z}.
\label{eq:Aformula}
\end{equation}
For $k<0$ the coefficient $2k(2k-1)$ is strictly positive, so $A$
vanishes only where $(Z-1)^2=0$, that is only at $Z=1$. But $Z=1$ is a
fixed point of the autonomous flow~\eqref{eq:autonomous}, and by
uniqueness of solutions no non-constant trajectory attains a fixed
point at finite $\tau$. Hence $A\neq0$ everywhere along a generic
trajectory. \qed
\end{proof}

On the singular branch Theorem~\ref{thm:schwbranch} determines the
solution completely. On the generic branch the reduction
to~\eqref{eq:master} involves no further loss of generality, and away
from the two fixed points the Picard--Lindel\"of theorem guarantees
uniqueness of trajectories, so the one-parameter
integral~\eqref{eq:implicit_abs} captures every non-constant
generic-branch solution, in either orientation.
With Lemma~\ref{lem:noswitch} the two-branch decomposition is
therefore complete, and what remains is the invertibility
of~\eqref{eq:implicit_abs}.

\subsection{Algebraic character and radical solvability}
\label{ssec:algebraic}

For every admissible pair $(d,N)$ the parameter $k$ of~\eqref{eq:k_def}
is rational, so the integral~\eqref{eq:implicit_abs} is an
\emph{algebraic} relation between $Z$ and $x$. Writing $2k=-u/v$ in
lowest terms and clearing fractional exponents gives, on each connected
real sector of the phase line,
\begin{equation}
  (Z-1)^{u+v}=c\,x^{u+v}\,L(Z)^{u},
\label{eq:algebraic}
\end{equation}
with $c$ a non-zero sector-dependent constant. This relation is
irreducible, so the spatial potential is an algebraic function of $x$
of degree exactly $u+v$ and the generic conformally flat pure Lovelock
metric is never transcendental in $x$
(Lemma~\ref{lem:irreducible}).

Inversion by radicals is guaranteed whenever $u+v\le4$. Enumerating the
coprime pairs gives the complete list of such families,
$k\in\{-\tfrac14,-\tfrac12,-1,-\tfrac32,-\tfrac16\}$, corresponding
respectively to $d=3N-1$, $d=4N-2$, $d=6N-4$, $d=8N-6$ and $3d=8N-2$,
the last requiring $N\equiv1\pmod 3$ and so first arising physically at
$(N,d)=(4,10)$. Degree at most four is sufficient but not necessary,
since a polynomial of higher degree may still have a solvable Galois
group; for four representative higher-degree configurations, however,
the obstruction is genuine, the Galois group being the full symmetric
group $S_5$ or $S_7$ (Proposition~\ref{prop:galois}). The exact models
of the solvable families are developed in a companion
paper~\cite{HansrajCompanion}; here we retain the implicit and
parametric descriptions, which treat all $(d,N)$ uniformly.
Table~\ref{tab:cases} records the algebraic data for representative
cases. Proofs of both statements, together with the modular
certificates, are collected in Appendix~\ref{app:galois}.

\begin{table}
\caption{Representative physical pairs $(d,N)$, the parameter $k$, the
algebraic degree $u+v$ of the relation~\eqref{eq:algebraic} where
$2k=-u/v$ in lowest terms, and the solvability status of the generic
branch}
\label{tab:cases}
\begin{ruledtabular}
\begin{tabular}{ccccl}
$N$ & $d$ & $k$ & $u+v$ & Status\\ \hline
2 & 5  & $-1/4$  & 3 & radicals (cubic)~\cite{HansrajCompanion}\\
2 & 6  & $-1/2$  & 2 & radicals (quadratic)~\cite{HansrajCompanion}\\
2 & 7  & $-3/4$  & 5 & insoluble, Galois group $S_5$\\
3 & 7  & $-1/8$  & 5 & insoluble, Galois group $S_5$\\
3 & 9  & $-3/8$  & 7 & insoluble, Galois group $S_7$\\
3 & 10 & $-1/2$  & 2 & radicals (quadratic)~\cite{HansrajCompanion}\\
4 & 9  & $-1/12$ & 7 & insoluble, Galois group $S_7$\\
\end{tabular}
\end{ruledtabular}
\end{table}

\begin{theorem}[Completeness and solvability]
\label{thm:nogo}
For $N\ge2$ and $d\ge2N+1$ the complete solution set of the conformally
flat isotropic pure Lovelock system consists of exactly the following,
and nothing else:
\begin{enumerate}
\item[(a)] the constant-density Schwarzschild interior
  $Z=1+\beta x$, $y=a\sqrt{1+\beta x}+b$ with $\beta\neq0$;
\item[(b)] the spatially flat vacuum $Z\equiv1$, $y=ax+b$, with
  $\rho=p=0$; its constant-lapse subcase $a=0$ is the sole
  intersection of the two branches, since at $Z\equiv1$ the second
  factor of~\eqref{eq:factorised} reduces to $4(1-N)x\dot{y}$ and so
  vanishes only when $\dot{y}=0$;
\item[(c)] the power-law fixed-point solution $Z\equiv Z^*$,
  $y=y_0x^{(4k-1)/(4k)}$ of~\eqref{eq:fixedsol}, which is the pure
  Lovelock isothermal sphere when $d\ge2N+2$ and a degenerate
  zero-density state when $d=2N+1$, by~\eqref{eq:fixedpointcases};
\item[(d)] the non-constant generic family~\eqref{eq:implicit_abs},
  which contains every non-constant regular-centre solution on the
  generic branch --- the singular branch of~(a) and the flat
  solution~(b) may also be regular at the centre, whereas~(c) has
  $p\propto x^{-N}$ and is singular there. This family has two
  orientations: the physically admissible one occupies the arc
  $Z^*<Z<1$, on which~\eqref{eq:implicit_abs} reduces to the real
  form~\eqref{eq:implicit}, while the orientation entering $Z>1$ is
  excluded by the conditions of Sect.~\ref{ssec:central}. Note that
  \eqref{eq:implicit} is not real on $Z>1$ for fractional $2k$, so it
  is~\eqref{eq:implicit_abs} and not~\eqref{eq:implicit} that is the
  complete integral.
\end{enumerate}
On (d), $Z(x)$ is an algebraic function of $x$ of degree exactly $u+v$,
where $2k=-u/v$ in lowest terms
(Lemma~\ref{lem:irreducible}). A closed form by radicals is guaranteed
whenever $u+v\le4$, that is for
$k\in\{-\tfrac14,-\tfrac12,-1,-\tfrac32,-\tfrac16\}$; for $u+v\ge5$ it
holds if and only if the Galois group of~\eqref{eq:algebraic} is
solvable, which fails for the four cases of
Proposition~\ref{prop:galois}.
\end{theorem}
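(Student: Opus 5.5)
The proof is mainly a matter of assembling results already established. The starting point is the exact identity~\eqref{eq:factorised}, which follows from~\eqref{eq:isotropy_Z} and~\eqref{eq:conformal_Z}. By Lemma~\ref{lem:noswitch}, every maximal solution lies entirely on one branch.

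On the singular branch $x\dot Z=Z-1$, so $Z=1+\beta x$. Theorem~\ref{thm:schwbranch} then gives the complete solution for $y$. For $\beta\neq0$ this is item~(a). For $\beta=0$ it is item~(b), with $\rho=p=0$ because $N\ge2$. The statement about the intersection in~(b) follows from evaluating the second factor at $Z\equiv1$, which leaves $4(1-N)x\dot y$. Since $N\neq1$, this vanishes only when $a=0$. That solution is also the $Z\equiv1$ equilibrium of~\eqref{eq:master}, with $y$ constant by~\eqref{eq:ydot_over_y}.

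On the generic branch the second factor vanishes, which gives~\eqref{eq:ydot_over_y}. Substituting this into~\eqref{eq:conformal_Z} gives the master equation~\eqref{eq:master}. Dividing out $Z$ loses nothing here, because $Z=e^{-\lambda}>0$. Equation~\eqref{eq:master} is autonomous in $\tau=\ln x$, so its solutions are of three kinds: the two equilibria $Z\equiv1$ and $Z\equiv Z^*$, or non-constant trajectories. By Picard--Lindel\"of these trajectories never reach an equilibrium at finite $\tau$, so the separation in Theorem~\ref{thm:general} is legitimate and yields~\eqref{eq:implicit_abs}.

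For the equilibrium $Z^*$, integrating~\eqref{eq:ydot_over_y} gives~\eqref{eq:fixedsol}. Inserting $Z^*$ into~\eqref{eq:rho_Z} gives $\rho\propto(1-Z^*)^{N}x^{-N}$ with coefficient $(d-2N-1)$. This vanishes exactly when $d=2N+1$, which is the case split~\eqref{eq:fixedpointcases}, and~\eqref{eq:pressure_explicit} gives $p\propto x^{-N}$, so the solution is singular at the centre; this is item~(c).

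For~(d), I will use the fact that the equilibria $1$ and $Z^*$ divide the phase line, so a non-constant trajectory stays within a single interval. A regular centre requires $Z\to1$ as $x\to0$, that is $\tau\to-\infty$. So regular-centre trajectories are exactly those emanating from the fixed point $Z=1$, and they leave it either downward into $(Z^*,1)$ or upward into $Z>1$. On $(Z^*,1)$ the absolute values in~\eqref{eq:implicit_abs} resolve to~\eqref{eq:implicit}. The orientation with $Z>1$ is excluded by the central conditions of Sect.~\ref{ssec:central}; I expect these to force $1-Z=ax$ with $a>0$ for positive central density, since $\rho_c\propto a^N$ and the required sign of $\rho_c$ is fixed. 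For the remark that~\eqref{eq:implicit} is not real when $Z>1$, note that there $xL/(1-Z)<0$ and $2k$ is fractional.

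For the algebraic statements, I will set $2k=-u/v$ and raise~\eqref{eq:implicit_abs} to the power $v$. Clearing denominators gives~\eqref{eq:algebraic}, whose degree in $Z$ is $u+v$: $(Z-1)^{u+v}$ has degree $u+v$, and $L^u$ has lower degree $u$. Degree exactly $u+v$ is the irreducibility claim of Lemma~\ref{lem:irreducible}. Its natural proof is by Eisenstein or Newton-polygon arguments in $\mathbb{C}(x)[Z]$: near $Z=1$ the valuation structure (a single root of order $u+v$ at $Z=1$ over $x=0$), together with $\gcd(u,v)=1$, forces a single Puiseux cycle. Radical solvability for degree at most four is classical. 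The enumeration $u+v\le4$ with $\gcd(u,v)=1$ gives $(u,v)\in\{(1,1),(1,2),(2,1),(1,3),(3,1)\}$, hence $k\in\{-\tfrac12,-\tfrac14,-1,-\tfrac16,-\tfrac32\}$. For $u+v\ge5$, solvability is equivalent to a solvable Galois group, which fails in the four cases of Proposition~\ref{prop:galois}.

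The main obstacle is the irreducibility of~\eqref{eq:algebraic} and the Galois computations. I take both from Appendix~\ref{app:galois}, where the Galois groups are computed by Dedekind reduction modulo primes, producing a $p$-cycle and a transposition. The rest is bookkeeping built on Lemma~\ref{lem:noswitch}.
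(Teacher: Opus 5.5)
Your assembly of the classification (factorisation, Lemma~\ref{lem:noswitch}, Theorem~\ref{thm:schwbranch}, separation on the generic branch, and the fixed-point case split) is the paper's own route. However, two of the sub-arguments you supply would not go through.

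\textbf{The exclusion of the $Z>1$ orientation in~(d).} Deferring to Sect.~\ref{ssec:central} is fine, but the mechanism you anticipate is not the one that works. You expect positivity of $\rho_c\propto a^N$ to force $a>0$, but that holds only for odd $N$. For even $N$, the Gauss--Bonnet case included, $a^N>0$ for either sign. The two orbits leaving the repeller then have identical positive central data: at $(N,d)=(2,6)$, $a=\pm1$ both give $\rho_c=60$, $p_c=36$, and both satisfy the dominant condition at the centre. The scaling action $a\mapsto a/\mu$ with $\mu>0$ cannot change the sign of $a$, so this is a genuinely distinct orbit that no central condition removes. The missing step is a far-field one. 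On $Z>1$ one has $f(Z)>0$, so $Z$ increases outward without bound. Since $f(Z)\sim(1-2k)Z$, one gets $Z\sim x^{1-2k}$, and hence $p\sim\rho\sim Z^N/x^N\sim r^{-4Nk}$, which grows because $k<0$. It is the requirement $p\to0$ as $r\to\infty$, as in~\eqref{eq:aneg}, that excludes this orientation for even $N$.

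\textbf{Your sketch of Lemma~\ref{lem:irreducible}.} This would also fail, because over $x=0$ there is no single Puiseux cycle. Putting $Z=1+xW$ turns~\eqref{eq:algebraic} into $W^{u+v}=c\,L(1+xW)^u$. At $x=0$ this has $u+v$ distinct simple roots, because $L(1)=1-4k\neq0$. The fibre therefore splits into $u+v$ unramified analytic branches $Z=1+W_jx+\dots$, which are precisely the regular-centre expansions $1-Z=ax+\dots$. Correspondingly, Eisenstein at the prime $x$ fails: after the shift to $Z-1$, every lower coefficient carries $x^{u+v}$ rather than a single factor of $x$. The Newton polygon is a single edge of lattice length $u+v$, which certifies nothing. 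Local data at the centre cannot detect irreducibility.

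The paper instead reads~\eqref{eq:algebraic} as $x^{u+v}-a$ over $K(Z)$, with $a=(Z-1)^{u+v}/[c\,L(Z)^u]$. The valuation of $a$ at the root of $L$ is $-u$, which is coprime to $u+v$. So $a$ is not a $p$th power for any prime $p\mid u+v$, nor does it lie in $-4K(Z)^4$. Capelli's theorem then gives irreducibility in $x$, and primitivity together with Gauss's lemma transfers it to $K(x)[Z]$. Because you ultimately cite the lemma, your chain of implications survives, but the justification you propose for it does not.
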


\section{Phase-space structure of the generic branch}
\label{sec:dynamical}

Casting the master equation as a one-dimensional autonomous system
yields a topological proof that no generic-branch solution has a finite
pressure-free boundary, an exact analytic prediction
of the large-$r$ power-law decay, and the identification of the
attractor with a pure Lovelock isothermal sphere, the last for
$d\ge2N+2$; in the critical dimension $d=2N+1$ the attractor is instead
the zero-density state of~\eqref{eq:fixedpointcases}. Each result is
dimension-independent and holds for every admissible $N\ge2$,
$d\ge2N+1$. The
attractor statements concern the trajectories that converge on $Z^*$,
which include the whole physical arc; the complementary orbits, among
them the $a<0$ orientation of Sect.~\ref{ssec:central}, escape and are
treated separately.

\subsection{Autonomous system, fixed points and eigenvalues}

The substitution $\tau=\ln x$ transforms~\eqref{eq:master} into
\begin{equation}
  \frac{dZ}{d\tau}=f(Z)\equiv
  \frac{(Z-1)\bigl[4k^2-(2k-1)^2Z\bigr]}{2k+(2k-1)Z},
\label{eq:autonomous}
\end{equation}
a flow on the line. The physical domain $x\in(0,\infty)$ corresponds to
$\tau\in(-\infty,\infty)$, the spacetime origin $r=0$ lying at
$\tau\to-\infty$ and the asymptotic region $r\to\infty$ at
$\tau\to+\infty$. The entire qualitative behaviour of every solution is
therefore encoded in the sign of $f$ and the location of its zeros.

Setting $f(Z)=0$ gives exactly two fixed points,
\begin{equation}
  Z_1=1,\qquad Z_2=Z^*\equiv\frac{4k^2}{(2k-1)^2},
\label{eq:fixedpoints}
\end{equation}
the first being the locus of flat spatial sections and the second, as
shown in Sect.~\ref{ssec:isothermal}, an exact power-law solution,
which is the pure Lovelock isothermal sphere for $d\ge2N+2$ and a
degenerate zero-density state in the critical dimension $d=2N+1$.
Their stability is fixed by $\lambda=f'(Z)$, which evaluates in closed
form to
\begin{align}
  \lambda_1 &\equiv f'(1)
  = \frac{4k^2-(2k-1)^2}{2k+(2k-1)}=\frac{4k-1}{4k-1}=1>0,
\label{eq:lambda1}\\
  \lambda_* &\equiv f'(Z^*)
  = -\frac{2k-1}{2k}=-\frac{d-2}{\,d-2N\,}<0 .
\label{eq:lambdastar}
\end{align}
Hence $Z_1=1$ is an unstable repeller with a \emph{universal}
eigenvalue independent of $(d,N)$, while $Z^*$ is a stable attractor
with $0<Z^*<1$ for every physical pair with $N\ge2$. The closed form in
\eqref{eq:lambdastar} is worth emphasising: the settling rate onto the
asymptotic state is fixed by the elementary ratio $(d-2)/(d-2N)$.
Table~\ref{tab:fixedpoints} lists the data for the cases studied below.

\begin{table}
\caption{Fixed-point data for the six configurations studied. All cases
have $k<0$ and $Z^*\in(0,1)$; the eigenvalue $\lambda_1=1$ at $Z_1=1$
is universal}
\label{tab:fixedpoints}
\begin{ruledtabular}
\begin{tabular}{ccccc}
$N$ & $d$ & $k$ & $Z^*=4k^2/(2k-1)^2$ & $\lambda_*$\\ \hline
2 & 5  & $-1/4$ & $1/9\approx0.111$   & $-3$\\
2 & 6  & $-1/2$ & $1/4=0.250$         & $-2$\\
2 & 7  & $-3/4$ & $9/25=0.360$        & $-5/3$\\
3 & 7  & $-1/8$ & $1/25=0.040$        & $-5$\\
3 & 9  & $-3/8$ & $9/49\approx0.184$  & $-7/3$\\
3 & 10 & $-1/2$ & $1/4=0.250$         & $-2$\\
\end{tabular}
\end{ruledtabular}
\end{table}

\subsection{Compactification and the desingularised flow}
\label{ssec:compactification}

The right-hand side of~\eqref{eq:autonomous} is a rational function
whose denominator vanishes, and a complete account of the global orbit
structure must accommodate that pole. Writing
\begin{equation}
  f(Z)=\frac{P(Z)}{Q(Z)},\quad
  \begin{aligned}
    P(Z)&=-(2k-1)^2(Z-1)(Z-Z^*),\\
    Q(Z)&=2k+(2k-1)Z,
  \end{aligned}
\label{eq:PQ}
\end{equation}
the line carries three distinguished points rather than two: the fixed
points at which $P$ vanishes, and the pole
\begin{equation}
  Z^Q_\infty=-\frac{2k}{2k-1},
\label{eq:pole}
\end{equation}
at which $Q$ vanishes and $f$ diverges. For every physical
configuration $k<0$, so $Z^Q_\infty<0$ and the pole lies strictly below
the physical interval $(Z^*,1)$. The appropriate mathematical device is the
projective compactification of the line together with a time
reparametrisation that desingularises the pole; we emphasise that the
Poincar\'e--Lyapunov compactification onto a sphere is a device for
planar polynomial fields, whereas the present flow is genuinely
one-dimensional and its correct analogues are $\mathbb{RP}^1\cong S^1$
and the rescaling below.

Introducing a new evolution parameter $\eta$ through
$d\eta=d\tau/Q(Z)$ absorbs the denominator into the parametrisation and renders
the flow polynomial,
\begin{equation}
  \frac{dZ}{d\eta}=P(Z)=-(2k-1)^2(Z-1)(Z-Z^*),
\label{eq:desing}
\end{equation}
a logistic equation whose rest states are the two fixed points and in
which the pole has become an ordinary interior point, crossed with
finite speed $P(Z^Q_\infty)\neq0$. In $\eta$ the point $Z=1$ attracts
and $Z^*$ repels. The reparametrisation reverses orientation wherever
$Q<0$, and $Q<0$ on the whole of $(Z^Q_\infty,\infty)$, since
$Q(1)=4k-1<0$ and $Q$ is monotone; the physical stabilities of
Sect.~\ref{sec:dynamical} are therefore a consequence of the sign of
$Q$ rather than of the polynomial $P$, a point the bare phase line
leaves implicit.

The behaviour at the ends of the line is settled in the chart $W=1/Z$,
in which~\eqref{eq:desing} becomes
\begin{equation}
  \frac{dW}{d\eta}=(2k-1)^2\bigl[1-(1+Z^*)W+Z^*W^2\bigr],
\label{eq:infinity}
\end{equation}
whose value at $W=0$ is $(2k-1)^2\neq0$. This non-vanishing does more
than establish that the point at infinity is a regular point of the
desingularised flow: it fixes which compactification is the correct
one. Because the velocity there is non-zero, an orbit arriving along
$Z\to+\infty$ continues into $Z\to-\infty$, so the flow itself
identifies the two ends of the line and $\mathbb{RP}^1$ is forced
rather than chosen. Had $dW/d\eta$ vanished at $W=0$, infinity would
have been an equilibrium and the two-point compactification would have
been the appropriate object instead. We note in passing that the
Poincar\'e--Lyapunov sphere is not an alternative here but an
inapplicable one: that construction is defined by central projection
for polynomial fields on $\mathbb{R}^n$ with $n\ge2$, and at $n=1$ it
degenerates precisely to the projective compactification used above, so
the chart~\eqref{eq:infinity} already carries its entire content.

The compactified phase space is therefore a circle carrying exactly two
hyperbolic fixed points, which divide it into two arcs. One arc is the
physical interval $(Z^*,1)$, swept monotonically in $\tau$ from the
repeller $Z=1$ to the attractor $Z^*$ and carrying the entire
physically admissible family of regular interiors; the $a<0$
orientation of Sect.~\ref{ssec:central} is also regular at the centre
but lies on $Z>1$ and is excluded there. The complementary arc passes
through the regular point at infinity and through the pole, and carries
the unphysical orbits, fenced off from the physical arc by the two
fixed points.

Two features of the sign structure deserve emphasis, because they are
easily misread from the physical interval alone. First, since $Q<0$
throughout $(Z^Q_\infty,\infty)$, the attractor $Z^*$ draws in the
whole interval $(Z^Q_\infty,1)$ and not merely the physical part above
it: on $(Z^Q_\infty,Z^*)$ one has $f>0$, so trajectories there also
converge on $Z^*$, from below. It is the pole, not the attractor, that
separates the escaping orbits from the rest, and the orbits that run
off to infinity are those below $Z^Q_\infty$ and those above $Z=1$.
Second, near $Z^Q_\infty$ one has
$\dot{Z}\sim\text{const}/(Z-Z^Q_\infty)$, so that
$(Z-Z^Q_\infty)^2\sim\tau-\tau_0$: the trajectory reaches the pole with
a vertical tangent in \emph{finite} $\tau$, while $d\tau/d\eta=Q$
passes through zero and changes sign. We draw no conclusion about the
nature of that locus: it lies at $Z<0$, hence outside the Lorentzian
static sector described by~\eqref{eq:metric} as well as outside the
physical interval, and no trajectory on the physical arc reaches it.
Whether it is a curvature singularity or merely a breakdown of the
areal radial coordinate is not settled by the desingularised flow
alone, and is not needed for anything that follows.

The pole and the point at infinity are reached on quite different terms, and the distinction is worth recording since the two statements above are made in different parametrisations along the same trajectories. Near infinity $f(Z)\sim(1-2k)Z$
with $1-2k>0$, so $Z\sim e^{(1-2k)\tau}$ and the end of the line is
attained only as $\tau\to+\infty$; the crossing is completed in finite
$\eta$ solely because $d\tau/d\eta=Q$ diverges there. The pole, by contrast, is genuinely attained at finite $\tau$. In the original radial variable, then, no trajectory reaches infinity at finite areal radius, and the smooth passage through $W=0$ is a property of the desingularised parametrisation rather than of the areal one.

Figure~\ref{fig:phase} displays the phase portraits. Since $f(Z)<0$ for all $Z\in(Z^*,1)$, every physical trajectory moves strictly leftward on the phase line: it originates at the repeller $Z=1$, descends monotonically through the interior region in which $\rho>0$ and $p>0$, and converges on $Z^*$ exponentially in $\tau$, that is as a power law
in $r$.

\begin{figure}
  \centering
  \includegraphics[width=0.48\textwidth]{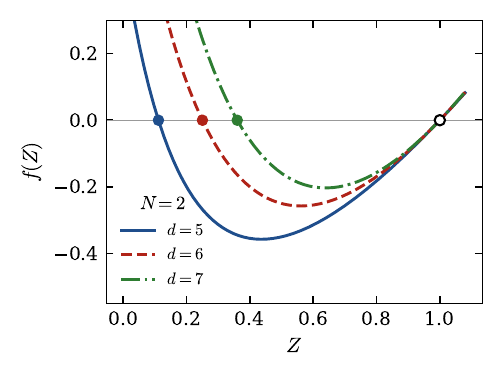}\hfill
  \includegraphics[width=0.48\textwidth]{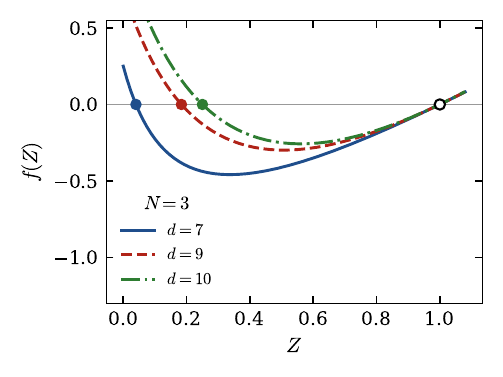}
  \caption{Phase portraits of the autonomous flow~\eqref{eq:autonomous}
  for the Gauss--Bonnet cases (left) and the cubic Lovelock cases
  (right). Filled circles mark the attractors $Z^*$; the open circle
  marks the universal repeller $Z=1$ with eigenvalue $\lambda_1=1$}
  \label{fig:phase}
\end{figure}

\subsection{\texorpdfstring{Large-$r$}{Large-r} decay law}

Near the attractor, write $Z=Z^*+\epsilon(\tau)$ with
$|\epsilon|\ll1$. Linearising~\eqref{eq:autonomous} gives
$d\epsilon/d\tau=\lambda_*\epsilon$, so that
$\epsilon\sim Ae^{\lambda_*\tau}=Ax^{\lambda_*}$ and, since $x=Cr^2$,
$\epsilon\sim A(Cr^2)^{\lambda_*}=\tilde{A}r^{2\lambda_*}$ with
$C^{\lambda_*}$ absorbed into $\tilde{A}$, so that
\begin{equation}
  Z(r)\approx Z^*+\tilde{A}\,r^{2\lambda_*}\qquad(r\to\infty).
\label{eq:Zasymptote}
\end{equation}
Substituting into the pressure formula~\eqref{eq:pressure_explicit},
\begin{equation}
  p(r)=K\left[\frac{(1-Z^*)^N}{r^{2N}}
  -\frac{N\tilde{A}(1-Z^*)^{N-1}}{r^{2N-2\lambda_*}}
  +\mathcal{O}\bigl(r^{4\lambda_*-2N}\bigr)\right],
\label{eq:pressuredecay}
\end{equation}
with $\tilde{A}$ as in~\eqref{eq:Zasymptote} and
\begin{equation}
  K=\frac{(d-2)!\,(d-N-1)}{2(d-2N-1)!\,(N-1)} .
\label{eq:Kdef}
\end{equation}
The factor $C^{N}$ carried by~\eqref{eq:pressure_explicit} cancels
against $x^{N}=C^{N}r^{2N}$, so no power of $C$ survives once the
result is written in terms of the areal radius: the decay law is a
statement about $r$ alone, independent of the normalisation
in~\eqref{eq:transform}. The leading decay $p\sim r^{-2N}$ is therefore \emph{universal} across
every trajectory converging on $Z^*$, and in particular across the
whole physical arc $Z^*<Z<1$; the $a<0$ orientation of
Sect.~\ref{ssec:central}, which escapes to $Z\to\infty$ instead, obeys
the different law~\eqref{eq:aneg}. The correction term carries the
$(d,N)$ dependence through $\lambda_*$. The eigenvalue thus plays a
double role: as the stability exponent of $Z^*$ in the phase flow, and
as the correction-to-leading exponent in the large-$r$ pressure
profile.

\subsection{Absence of a finite pressure-free boundary}

\begin{theorem}[No finite boundary]
\label{thm:unbounded}
Every generic-branch solution with $Z\not\equiv1$ and $N\ge2$ satisfies
$p(r)\neq0$ at every finite $r$; on the physical arc $Z^*<Z<1$ one has
$p(r)>0$ throughout, so no pressure-free boundary occurs at finite
radius. The Schwarzschild branch of Theorem~\ref{thm:schwbranch} is
consequently the only branch capable of describing a smoothly bounded
perfect-fluid star at any Lovelock order.
\end{theorem}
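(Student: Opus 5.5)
The plan rests on the explicit pressure formula~\eqref{eq:pressure_explicit}. On the generic branch $p$ is proportional to $(1-Z)^N/x^N$, and the prefactor is $K C^N$ with $K$ as in~\eqref{eq:Kdef}. First I would check that this prefactor is finite and nonzero on the whole admissible domain~\eqref{eq:domain}: $C>0$; $(d-2)!/(d-2N-1)!$ is positive because $d\ge2N+1$; $d-N-1\ge N>0$; and $N-1\ge1$. So $p$ has the sign of $(1-Z)^N$ at every finite $x>0$, that is at every finite $r>0$. The question therefore reduces entirely to whether $Z$ can take the value $1$ at finite $\tau=\ln x$.

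That question is settled by the phase-line structure of Sect.~\ref{sec:dynamical}. $Z=1$ is a fixed point of the autonomous flow~\eqref{eq:autonomous}. The vector field is smooth (indeed analytic) in a neighbourhood of $Z=1$, because the denominator $Q(1)=4k-1<0$ is nonzero there. Picard--Lindel\"of uniqueness then says no trajectory with $Z\not\equiv1$ can attain $Z=1$ at finite $\tau$; this is the same argument already used in Lemma~\ref{lem:noswitch}. Hence $1-Z\neq0$, and so $p\neq0$, at every finite $r$. One caveat: the trajectory could stop at the pole $Z^Q_\infty$ in finite $\tau$. Since $Z^Q_\infty<0<1$, this only truncates the solution's domain and does not create a zero of $p$, so the claim holds wherever the solution exists. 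The fixed-point solution $Z\equiv Z^*$ is also covered, with $p\propto(1-Z^*)^N x^{-N}>0$, since $Z^*<1$.

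On the physical arc the sign follows at once: $Z^*<Z<1$ gives $1-Z>0$, hence $p>0$ for all finite $r$. By the monotonicity $f<0$ on $(Z^*,1)$, the trajectory remains inside this arc for all $\tau\in(-\infty,\infty)$, tending to $1$ only as $\tau\to-\infty$ and to $Z^*$ as $\tau\to+\infty$. In addition,~\eqref{eq:pressuredecay} shows that $p\to0$ only as $r\to\infty$, at the rate $r^{-2N}$. So no finite pressure-free surface exists.

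The final assertion then follows from the completeness result, Theorem~\ref{thm:nogo}. Of the listed families, (b) carries no matter, (c) and (d) have been shown to have no finite zero of $p$, and Lemma~\ref{lem:noswitch} rules out gluing branches. The only remaining candidate is (a), the Schwarzschild interior. For that branch I would verify that $p$ can vanish at finite radius by choosing $a,b$ in $y=a\sqrt{1+\beta x}+b$ so that~\eqref{eq:p_Z} vanishes at some $x_b$. The main subtlety I expect is stating the uniqueness argument carefully at $Z=1$, in particular confirming that $f$ is Lipschitz there, and making sure the $d=2N+1$ case does not spoil the positivity of the prefactor. It does not, because $(d-2N-1)!=0!=1$.
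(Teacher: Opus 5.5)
Your argument is correct and follows the paper's proof. Equation~\eqref{eq:pressure_explicit} reduces $p=0$ at finite $x$ to $Z=1$; uniqueness at the fixed point $Z=1$ of~\eqref{eq:autonomous}, where as you note $Q(1)=4k-1\neq0$, forbids reaching it at finite $\tau$; and $f<0$ on $(Z^*,1)$ keeps the physical orbit inside the arc, so $p>0$. Your checks of the prefactor, the pole and the fixed-point solution, and your appeal to Theorem~\ref{thm:nogo} and Lemma~\ref{lem:noswitch} for the last sentence, only make explicit what the paper leaves implicit.

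Drop, however, the proposed verification that the Schwarzschild branch actually reaches $p=0$ at finite radius. The final sentence is an exclusion statement and does not need it, and the check fails in the critical dimension. With $Z=1+\beta x$ and $y=a\sqrt{Z}+b$, the bracket in~\eqref{eq:p_Z} equals $\beta x\bigl[(d-1)a\sqrt{Z}+(d-2N-1)b\bigr]$. For $d\ge2N+2$ a zero at finite radius can be arranged through the choice of $b/a$. At $d=2N+1$, however, the $b$-term disappears, and $p$ vanishes only where $Z=0$, at which $e^{\lambda}$ diverges, or identically when $a=0$. This is consistent with the remark in Sect.~\ref{sec:field} that no bound distribution of finite radius exists in $d=2N+1$.
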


\begin{proof}
A bounded stellar object requires $p=0$ at some finite radius $r=R$.
By~\eqref{eq:pressure_explicit}, $p=0$ at finite $x$ if and only if
$Z=1$. But $Z=1$ is the repeller $Z_1$, a fixed point of the
flow~\eqref{eq:autonomous}, so by uniqueness no solution with
$Z\not\equiv1$ attains it at finite $\tau$; hence $p(r)\neq0$ at every
finite $r$ on any such solution. On the physical arc, all trajectories
in $(Z^*,1)$ have $f(Z)<0$, so $Z(\tau)$ is strictly decreasing and
bounded below by $Z^*>0$, giving $Z(r)<1$ and therefore $p(r)>0$ for all
finite $r$: the pressure vanishes only asymptotically. The excluded case
$Z\equiv1$ is the vacuum of Theorem~\ref{thm:schwbranch}, for which
$p\equiv0$ everywhere and no boundary exists either. \qed
\end{proof}

The phase-line picture clarifies the relation between the branches.
The singular branch $Z=1+\beta x$ is not a trajectory
of~\eqref{eq:autonomous} at all: it lies on the locus annihilating the
first factor of~\eqref{eq:factorised}, and it is precisely the family
that exits $Z=1$ and supports a pressure-free surface. On the generic
branch the repeller at $Z=1$, with universal eigenvalue $\lambda_1=1$,
prevents any trajectory from straddling both a stellar centre and a
stellar surface: boundedness and genericity are mutually exclusive. In
Einstein gravity the generic branch is empty and only the singular
branch survives, whose $\beta\neq0$ members are the bounded
Schwarzschild interiors; this is the classical uniqueness result seen
from the phase line.

\subsection{The attractor as an isothermal sphere}
\label{ssec:isothermal}

The fixed point $Z=Z^*$ is itself an exact solution. With $Z$ constant,
\eqref{eq:ydot_over_y} integrates to a pure power law,
\begin{equation}
  y\propto x^{m},\qquad m=\frac{k(Z^*-1)}{Z^*}=\frac{4k-1}{4k},
\end{equation}
which is the solution~\eqref{eq:fixedsol}. At a fixed point $\dot{Z}=0$,
so~\eqref{eq:rho_Z} reduces to
$\rho\propto(d-2N-1)(1-Z^*)^{N}x^{-N}$, while
$p\propto(d-N-1)(1-Z^*)^{N}x^{-N}/(N-1)$
from~\eqref{eq:pressure_explicit}. The factor $(d-2N-1)$ in the density
but not in the pressure makes the critical dimension exceptional, and
the two cases must be separated:
\begin{equation}
  \begin{aligned}
   d\ge2N+2:&\;\; Z^*\ \text{is the pure Lovelock isothermal sphere},\\
   d=2N+1:  &\;\; \rho(Z^*)=0,\ p(Z^*)\neq0 .
  \end{aligned}
\label{eq:fixedpointcases}
\end{equation}

For $d\ge2N+2$ both $\rho$ and $p$ scale as $x^{-N}$ with constant ratio
\begin{equation}
  w_\infty\equiv\frac{p}{\rho}=\frac{d-N-1}{(N-1)(d-2N-1)},
  \qquad d\ge2N+2 ,
\label{eq:winf}
\end{equation}
an isothermal fluid sphere of the type whose universality across
Lovelock orders was established in~\cite{Dadhich2016a}, with the
equation-of-state parameter now selected by conformal flatness. At
$N=1$, $d=4$ the density profile $\rho\propto r^{-2N}$ reduces to the
$r^{-2}$ law of the singular isothermal sphere, the standard
idealisation of an extended self-gravitating
halo~\cite{BinneyTremaine,Chandrasekhar1939};
equation~\eqref{eq:winf} is thus a higher-curvature isothermal-halo
analogue, with the Lovelock term steepening the fall-off from $r^{-2}$
to $r^{-2N}$. We stress that the analogy is structural rather than
phenomenological: for $N>1$ the density falls considerably faster than
the four-dimensional profile associated with flat rotation curves, and
no claim about galactic dynamics is intended. Every
generic trajectory in the basin of $Z^*$ --- in particular every
solution on the physical arc $Z^*<Z<1$ --- relaxes in these dimensions,
at the rate set by $\lambda_*$, onto this universal isothermal state:
the attractor of the phase flow \emph{is} the isothermal sphere. The
escaping orbits, among them the $a<0$ orientation of
Sect.~\ref{ssec:central}, do not approach it at all.

This gives the attractor a direct physical interpretation. The isothermal profile is not merely one admissible solution among the conformally flat family
but the state that the entire family approaches: whatever regular central data a configuration begins with, its exterior forgets them and
settles onto the same universal law, with only the overall scale
retained. In this sense conformal flatness selects the isothermal halo
as its generic large-$r$ behaviour, and the closed-form eigenvalue
$\lambda_*=-(d-2)/(d-2N)$ of~\eqref{eq:lambdastar} measures how quickly
that memory is lost.

In the critical dimension $d=2N+1$ the coefficient $(d-2N-1)$ vanishes
and the fixed point carries pressure but no energy density. It is
therefore \emph{not} an isothermal fluid, $w_\infty$
of~\eqref{eq:winf} is undefined there, and the attractor is a
degenerate critical state rather than a member of the isothermal
family. Trajectories still converge to $Z^*$ geometrically, but their
density is controlled by the leading perturbation away from the fixed
point rather than by the fixed point itself, which is the analytic
origin of the anomalous tail derived below and of the divergence of
$p/\rho$ observed numerically in Sect.~\ref{sec:physical}.

Two consequences follow at once. First, the asymptotic sound speed is
$dp/d\rho\to w_\infty$, so the far field is causal only when
$w_\infty\le1$, that is when
\begin{equation}
  (N-2)\,d \;\ge\; 2N^2-2N-2 .
\label{eq:causalitybound}
\end{equation}
No Gauss--Bonnet case satisfies this bound. At $N=3$ it reads
$d\ge10$, saturated exactly at $d=10$ and satisfied strictly for every
$d>10$; at $N=4$ it requires $d\ge11$. This explains analytically why
$(N,d)=(3,10)$ emerges in Sect.~\ref{sec:physical} as the only
subluminal configuration \emph{among the six plotted}: it is the
smallest admissible dimension meeting the bound at $N=3$, not a
globally isolated case. Whenever
$w_\infty>1$ the dominant energy condition, although satisfied at the
centre, fails beyond a finite crossing radius.

Second, in the critical dimension $d=2N+1$, where by
\eqref{eq:fixedpointcases} the fixed point carries no density, the
leading term of~\eqref{eq:rho_Z} vanishes identically and the density
is governed by the first correction, decaying as
$\rho\sim r^{2\lambda_*-2N}$, strictly faster than the pressure. The
ratio $p/\rho$ then grows without bound, consistent with $w_\infty$
being undefined there; the dominant condition fails in the far field,
the sound speed diverges, and the adiabatic index grows as
$\Gamma\sim r^{-2\lambda_*}$. These are the analytic signatures of the
critical-dimension tail exhibited numerically in
Sect.~\ref{sec:physical}.

\begin{remark}[The zero-density states]
\label{rem:zerodensity}
Two configurations in the classification carry pressure but no energy
density: the fixed-point solution~\eqref{eq:fixedsol} in the critical
dimension $d=2N+1$, by~\eqref{eq:fixedpointcases}, and the $N=1$,
$\beta=0$ member of Corollary~\ref{cor:einstein}. Neither should be
regarded as an admissible matter distribution. With $\rho=0$ and $p>0$
the null and strong conditions are satisfied, and the weak condition
only marginally, but the dominant condition $\rho\ge|p|$ fails, and
fails maximally: the equation-of-state ratio $p/\rho$ is unbounded, so
there is no frame in which the energy flux remains causal. A fluid with
pressure and no energy density is not a physically acceptable source.

The two cases are nevertheless retained, for different and equally
mundane reasons. The $N=1$ solution is a degenerate member of the
singular branch, recorded so that the Einstein-limit statement is
exhaustive; it carries no weight in the pure Lovelock analysis, where
the factor $(1-Z)^{N-1}$ removes it. The critical-dimension fixed point
matters more, because it is the endpoint of a genuine flow. It is
approached but never attained at finite radius, and what the vanishing
of $\rho(Z^*)$ signals is not the existence of an exotic source but the
breakdown of the far field in $d=2N+1$: the density is driven to zero
faster than the pressure, $p/\rho$ diverges, and the dominant condition
is violated beyond a finite radius. This is the analytic content of the
critical-dimension tail exhibited in Sect.~\ref{sec:physical}, and it
is properly read as a statement about the asymptotics of the critical
dimension rather than about the fixed point as a spacetime in its own
right. Consistently with this, no bound distribution of finite radius
exists in $d=2N+1$~\cite{Dadhich2017}.
\end{remark}

\subsection{Expansion about the repeller and central values}
\label{ssec:central}

Expanding~\eqref{eq:master} about the repeller with
$1-Z=ax+bx^2+\mathcal{O}(x^3)$ fixes the second-order coefficient
uniquely in terms of the first,
\begin{equation}
  b=\frac{2k(2k-1)}{4k-1}\,a^2,
\end{equation}
while $a\neq0$ remains free. Substituting into~\eqref{eq:rho_Z}
and~\eqref{eq:pressure_explicit} yields the finite central values
\begin{equation}
  \rho_c=\frac{(d-1)!\,(aC)^N}{2(d-2N-1)!},\quad
  p_c=\frac{(d-2)!\,(d-N-1)(aC)^N}{2(d-2N-1)!\,(N-1)},
\label{eq:centralvals}
\end{equation}
so that $p_c/\rho_c=(d-N-1)/[(N-1)(d-1)]<1$. For $a>0$, where $\rho_c$
and $p_c$ are positive, the dominant condition therefore always holds
at the centre; the sign restriction matters, since for odd $N$ the
choice $a<0$ makes both central quantities negative, as noted
immediately below.

\subsubsection*{The two regular-centre orientations}

Regularity at the centre does not by itself fix the sign of $a$, and
since $\rho_c$ and $p_c$ of~\eqref{eq:centralvals} depend on $a$ only
through $a^N$, both signs give positive central values whenever $N$ is
\emph{even}. There are therefore two distinct regular-centre orbits
leaving the repeller, and because the scaling action of
Lemma~\ref{lem:scaling} is by $a\mapsto a/\mu$ with $\mu>0$, which
cannot change the sign of $a$, they are
genuinely different orbits rather than rescalings of one another. The
case $a>0$ gives $Z<1$ and is the arc studied throughout; the case
$a<0$ gives $Z>1$. A concrete instance is $(N,d)=(2,6)$, where $a=-1$
reproduces exactly the central values $\rho_c=60$, $p_c=36$ of the
$a=+1$ solution.

The second orientation is nevertheless excluded by the physical
requirements, and it is worth deriving this rather than assuming it. For
$Z>1$ and $k<0$ one has $f(Z)>0$, so $Z$ increases outward and the
trajectory runs to $Z\to+\infty$; from~\eqref{eq:autonomous},
$f(Z)\sim(1-2k)Z$ as $Z\to\infty$, whence
\begin{equation}
  Z\sim x^{1-2k},\qquad
  p\sim\rho\sim\frac{Z^N}{x^N}\sim x^{-2Nk}=r^{-4Nk}.
\label{eq:aneg}
\end{equation}
Since $k<0$ the exponent $-4Nk$ is positive: pressure and density
\emph{grow} without bound with radius rather than decaying, so the
requirement $p\to0$ as $r\to\infty$ fails. For odd $N$ the orbit is
excluded more simply, since $a^N<0$ makes $\rho_c$ and $p_c$ negative at
the outset. In either parity the $a<0$ orientation is inadmissible, and
the condition $a>0$, with the physical arc $Z^*<Z<1$, is thereby
\emph{derived} rather than imposed. Direct numerical integration
confirms~\eqref{eq:aneg}: for $(N,d)=(2,6)$ with $a=-1$ the pressure
grows as $r^{+4}$, against $r^{-4}$ for $a=+1$.

The rejected orientation nevertheless admits a complete and rather
clean asymptotic characterisation, which is worth recording because it
locates precisely where the pathology lies. Since
\begin{equation}
  4Nk-2N-2k+1=(2N-1)(2k-1),
\label{eq:factorid}
\end{equation}
the M\"obius relation~\eqref{eq:ratio_closed} gives, as
$Z\to+\infty$,
\begin{equation}
  \frac{p}{\rho}\;\longrightarrow\;\frac{1}{2N-1},
  \qquad
  \frac{dp}{d\rho}\;\longrightarrow\;\frac{1}{2N-1},
\label{eq:aneg_eos}
\end{equation}
the second limit following because $p$ and $\rho$ carry the same
leading radial power in~\eqref{eq:aneg}. Both limits are independent
of the dimension. Moreover $Z$ \emph{increases} outward on this
orientation, so by Theorem~\ref{thm:monotone} the ratio $p/\rho$
\emph{decreases} outward, from its central value towards $1/(2N-1)$;
the descent is genuine since
\begin{equation}
  \frac{p_c}{\rho_c}-\frac{1}{2N-1}
  =\frac{N(d-2N)}{(N-1)(2N-1)(d-1)}>0
\end{equation}
for every admissible pair. The asymptotic equation of state of the rejected branch is therefore
not merely regular but subluminal and dimension-independent, and the
grounds for rejection differ with the parity of $N$. For \emph{even}
$N$, where the central matter variables are positive, the sole
obstruction is the unbounded outward growth of $\rho$ and $p$
in~\eqref{eq:aneg}, and certainly not the ratio between them. For odd
$N$ the orbit is excluded twice over, by the negative central values
$\rho_c,p_c<0$ as well as by that growth. The central sound speed and central
adiabatic index follow in closed form, independently of $a$ as
Lemma~\ref{lem:scaling} requires:
\begin{equation}
  \left.\frac{dp}{d\rho}\right|_c=\frac{d-N-1}{(N-1)(d+1)},\qquad
  \Gamma_c=\frac{N(d-2)}{(N-1)(d+1)}.
\label{eq:central}
\end{equation}
The central sound speed is subluminal for every $(d,N)$. For
completeness we record the corresponding adiabatic index
$\Gamma_c=N(d-2)/[(N-1)(d+1)]$; since the critical stability index for
$d$-dimensional pure Lovelock gravity is not presently
known~\cite{Chandrasekhar1964a,Chandrasekhar1964b}, we do not pursue
that quantity further. Causality violation, where it occurs, is a
far-field phenomenon and not a central one: for the non-critical
dimensions $d\ge2N+2$ it is governed by $w_\infty$
of~\eqref{eq:winf}, and in the critical dimension $d=2N+1$, where
$w_\infty$ does not exist, by the critical tail and the exponent
$\lambda_*$ of~\eqref{eq:lambdastar}.

\section{Parametric representation and thermodynamics}
\label{sec:parametric}

Whether or not~\eqref{eq:algebraic} can be inverted by radicals, all
physical quantities on the physically admissible generic orbit can be
rendered fully analytic by a parametric representation valid uniformly
in $k$. The
natural parameter is the projective coordinate adapted to the two fixed
points of Sect.~\ref{sec:dynamical}. On the physical domain
$Z^*<Z<1$ set
\begin{equation}
  s=\frac{(2k-1)^2Z-4k^2}{1-Z}=\frac{(2k-1)^2(Z-Z^*)}{1-Z},
\label{eq:s_def}
\end{equation}
which vanishes at $Z^*$, diverges at $Z=1$, and is strictly increasing
in $Z$ since $ds/dZ=(1-4k)/(1-Z)^2>0$. It therefore maps the physical
arc bijectively onto $s\in(0,\infty)$, with $s\to\infty$ at the centre
($Z\to1$, $x\to0$) and $s\to0^+$ at spatial infinity ($Z\to Z^*$,
$x\to\infty$); the direction of increasing $r$ is the direction of
decreasing $s$. Inversion gives
\begin{equation}
  Z(s)=\frac{4k^2+s}{(2k-1)^2+s},\qquad
  1-Z(s)=\frac{1-4k}{(2k-1)^2+s},
\label{eq:Zs}
\end{equation}
and substitution into~\eqref{eq:implicit} yields
\begin{equation}
  x(s)^{\,2k-1}=\frac{b\bigl[(2k-1)^2+s\bigr]}{(1-4k)\,s^{2k}},
\label{eq:xs}
\end{equation}
with $b>0$ the redefined constant of the family. Since $s>0$, all
fractional powers are real.

In this coordinate the flow~\eqref{eq:autonomous} reads
\begin{equation}
  \frac{ds}{d\tau}=-\frac{s\bigl[s+(2k-1)^2\bigr]}{s+2k(2k-1)},
\label{eq:sflow}
\end{equation}
whose linearisations at the two ends return $ds/d\tau\approx\lambda_*s$
as $s\to0$ and, in $u=1/s$, $du/d\tau\approx\lambda_1u=u$ as
$s\to\infty$, recovering the eigenvalues~\eqref{eq:lambda1}
and~\eqref{eq:lambdastar} without further computation. The pole
of~\eqref{eq:sflow} sits at $s=-2k(2k-1)<0$, again outside the physical
range, in agreement with Sect.~\ref{ssec:compactification}.

\subsection{Metric potentials and thermodynamic quantities}

Equations~\eqref{eq:Zs} and~\eqref{eq:xs} give the spatial potential.
The temporal potential follows by integrating~\eqref{eq:ydot_over_y}
along the trajectory and --- this is the point at which the
representation becomes a solution rather than a reduction --- the
integral evaluates in elementary closed form.

\begin{theorem}[Explicit parametric solution]
\label{thm:explicit}
For every $k$ in the admissible range~\eqref{eq:domain} the physical
non-constant orbit $Z^*<Z<1$ is given in fully explicit parametric
form, with $s\in(0,\infty)$ decreasing outward, by
\begin{align}
  Z(s) &= \frac{4k^2+s}{(2k-1)^2+s},
\label{eq:Zexp}\\
  x(s) &= \left[\frac{b\bigl[(2k-1)^2+s\bigr]}{(1-4k)\,s^{2k}}
          \right]^{\frac{1}{2k-1}},
\label{eq:xexp}\\
  y(s) &= y_0\,
          \bigl[s+(2k-1)^2\bigr]^{\frac{k}{2k-1}}
          \bigl(s+4k^2\bigr)^{1/2}
          s^{-\frac{4k-1}{2(2k-1)}},
\label{eq:yexp}
\end{align}
with $b>0$ and $y_0>0$ constants. No quadrature remains: both metric
potentials of every physically admissible regular generic-branch
interior are elementary functions of $s$, at arbitrary order $N$ and in
arbitrary admissible dimension $d$. The $a<0$ orientation of
Sect.~\ref{ssec:central} is not represented by $s>0$ and is not
covered. The restriction to $Z^*<Z<1$ is
what makes $s>0$, and hence all the fractional powers real; other
generic trajectories lie in $s<0$ and require separate branch choices.
\end{theorem}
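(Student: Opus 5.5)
The plan is to treat the three formulae in turn, working throughout on the physical arc $Z^*<Z<1$, where the coordinate $s$ of~\eqref{eq:s_def} is a smooth strictly increasing bijection onto $(0,\infty)$. First, \eqref{eq:Zexp} is obtained by solving the Möbius relation~\eqref{eq:s_def} for $Z$. This is the computation already recorded in~\eqref{eq:Zs}, and it gives in passing
\[
1-Z=\frac{1-4k}{(2k-1)^2+s},\qquad L(Z)=s\,(1-Z),\qquad \frac{Z-1}{Z}=-\frac{1-4k}{s+4k^2}.
\]

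Second, for~\eqref{eq:xexp} I will substitute $L=s(1-Z)$ into the real integral~\eqref{eq:implicit} of Theorem~\ref{thm:general}. The bracket $xL/(1-Z)$ collapses to $xs$, so the integral reads $(1-Z)\,x^{2k-1}s^{2k}=c_1$. Using the expression for $1-Z$ then gives
\[
x^{2k-1}=\frac{c_1\bigl[(2k-1)^2+s\bigr]}{(1-4k)\,s^{2k}} .
\]
Since $k<0$, every factor on the right is positive, and $2k-1\neq0$ by Lemma~\ref{lem:scaling}. The $(2k-1)$-th root is therefore real and single-valued, which is~\eqref{eq:xexp} with $b=c_1$.

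Third, for~\eqref{eq:yexp} I will rewrite~\eqref{eq:ydot_over_y} in the variable $\tau=\ln x$ as $d\ln y/d\tau=k(Z-1)/Z$, and change variable to $s$ using~\eqref{eq:sflow}, that is $d\tau/ds=-[s+2k(2k-1)]/\{s[s+(2k-1)^2]\}$. This produces the rational form
\[
\frac{d\ln y}{ds}=\frac{k(1-4k)\,\bigl[s+2k(2k-1)\bigr]}{s\,(s+4k^2)\,\bigl[s+(2k-1)^2\bigr]} .
\]
This has three simple poles, so a partial-fraction decomposition is needed. The residues, each computed by direct evaluation, are: $-(4k-1)/[2(2k-1)]$ at $s=0$; $1/2$ at $s=-4k^2$ (where the factor $(2k-1)^2-4k^2=1-4k$ cancels); and $k/(2k-1)$ at $s=-(2k-1)^2$ (where $4k^2-(2k-1)^2=4k-1$ cancels). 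Integrating term by term gives exactly the three power factors in~\eqref{eq:yexp}, with $y_0>0$ the multiplicative constant. All bases are positive for $s>0$, so every fractional power is real.

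Finally, I have to confirm that the parametrised triple actually solves the full system. The generic branch is defined by the vanishing of the second factor of~\eqref{eq:factorised}, which is~\eqref{eq:ydot_over_y}, together with~\eqref{eq:conformal_Z}. Given~\eqref{eq:ydot_over_y}, the second derivative~\eqref{eq:yddot_over_y} follows identically, and~\eqref{eq:conformal_Z} then reduces to the master equation~\eqref{eq:master}. This reduction is reversible, since it involves only division by $Z\neq0$ on the arc. Isotropy~\eqref{eq:isotropy_Z} is then recovered from~\eqref{eq:factorised} together with~\eqref{eq:conformal_Z}.

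The main obstacle I anticipate is the bookkeeping of the partial fractions together with these positivity and branch issues. In particular, $s\to\infty$ must correspond to the centre $x\to0$. Since $x^{2k-1}\sim s^{1-2k}$ there, this gives $x\sim s^{-1}$, which I will check as a consistency test of the orientation. I will also remark that the $a<0$ orbit and the other sectors have $s<0$, where the chosen real branches fail, and so are not covered by the statement.
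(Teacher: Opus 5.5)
Your proposal is correct and follows essentially the same route as the paper. $Z(s)$ and $x(s)$ come from substituting $L=s(1-Z)$ into \eqref{eq:implicit}, and $d\ln y/ds$ reduces to the same three-term partial-fraction decomposition \eqref{eq:dlny}, with exactly the residues you compute. The differences are minor: you take $d\tau/ds$ from the flow \eqref{eq:sflow} rather than differentiating \eqref{eq:xexp}, but both give the same expression and both make $dx/ds<0$ (hence the outward decrease of $s$) manifest. Your reversibility argument also gives a general-$k$ check of conformal flatness and isotropy, which the paper only reports as verified symbolically for representative $k$.
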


\begin{proof}
Equations~\eqref{eq:Zexp} and~\eqref{eq:xexp} restate~\eqref{eq:Zs}
and~\eqref{eq:xs}. Differentiating~\eqref{eq:xexp} gives
\begin{equation}
  \frac{dx}{ds}=\frac{x(s)\bigl[(1-2k)s-2k(2k-1)^2\bigr]}
                     {(2k-1)\,s\bigl[(2k-1)^2+s\bigr]}<0 ,
\label{eq:dxds}
\end{equation}
and substituting this together with~\eqref{eq:Zexp}
into~\eqref{eq:ydot_over_y}, written as
$d\ln y/ds=k\,[Z(s)-1]\,[x(s)Z(s)]^{-1}dx/ds$, yields after partial
fractions the elementary decomposition
\begin{equation}
  \frac{d\ln y}{ds}
  =\frac{k}{(2k-1)\bigl[s+(2k-1)^2\bigr]}
   +\frac{1}{2\bigl(s+4k^2\bigr)}
   -\frac{4k-1}{2(2k-1)\,s}.
\label{eq:dlny}
\end{equation}
Each term integrates to a logarithm; exponentiating
gives~\eqref{eq:yexp}. \qed
\end{proof}

Two features are worth recording. The middle factor
of~\eqref{eq:yexp} is
$(s+4k^2)^{1/2}=[Z(s)]^{1/2}[(2k-1)^2+s]^{1/2}$, so the temporal
potential inherits the square-root structure of the $\cosh$
representation~\eqref{eq:y_integral} with no transcendental residue.
And substituting the triple~\eqref{eq:Zexp}--\eqref{eq:yexp} back into
the conformal flatness condition~\eqref{eq:conformal_Z} returns zero
identically, which we have verified symbolically for representative
$k$: the parametric form is an exact solution, not an asymptotic one.
The pressure~\eqref{eq:pressure_explicit} becomes
\begin{equation}
  p(s)=\frac{C^N(d-2)!\,(d-N-1)}{2(d-2N-1)!\,(N-1)}
  \cdot\frac{(1-4k)^N}{\bigl[(2k-1)^2+s\bigr]^N x(s)^N},
\end{equation}
and, using $dZ/ds=(1-4k)/[(2k-1)^2+s]^2$, the radial gradient of the
spatial potential is
\begin{equation}
  \dot{Z}=\frac{dZ/ds}{dx/ds}
  =\frac{(1-4k)(2k-1)\,s}
        {x(s)\bigl[(2k-1)^2+s\bigr]\bigl[(1-2k)s-2k(2k-1)^2\bigr]},
\end{equation}
fully explicit in $s$ and enabling direct evaluation of the
density~\eqref{eq:rho_Z}. The causality condition $0\le dp/d\rho\le1$
may then be checked analytically through
$dp/d\rho=(dp/ds)/(d\rho/ds)$, both derivatives being available in
closed form. Along the trajectory the pressure decreases monotonically
from its finite central value~\eqref{eq:centralvals}, attained as
$s\to\infty$ where the ratio $(1-Z)/x$ remains finite, to zero as
$s\to0^+$, in accordance with Theorem~\ref{thm:unbounded}.

\subsection{Monotonicity of the equation of state}
\label{ssec:monotone}

The ratio $p/\rho$ is likewise available in closed form, and its
monotonicity can be established analytically rather than inferred from
the numerical profiles.

\begin{theorem}[Monotone equation of state]
\label{thm:monotone}
On the generic branch, in either orientation,
\begin{equation}
  \frac{p}{\rho}
  =\frac{(2k-1)Z+2k}{\bigl(4Nk-2N-2k+1\bigr)Z-2k},
\label{eq:ratio_closed}
\end{equation}
and consequently
\begin{equation}
  \frac{d}{dZ}\!\left(\frac{p}{\rho}\right)
  =-\frac{4Nk(2k-1)}
        {\bigl[\bigl(4Nk-2N-2k+1\bigr)Z-2k\bigr]^{2}}<0 ,
\label{eq:ratio_deriv}
\end{equation}
since $k<0$ and $2k-1<0$. As $Z$ decreases monotonically outward along
every physical trajectory, $p/\rho$ increases monotonically with $r$,
rising from the central value $(d-N-1)/[(N-1)(d-1)]$ to the limit
\begin{equation}
  \frac{p}{\rho}\;\longrightarrow\;
  \begin{cases}
    w_\infty \text{ of~\eqref{eq:winf}}, & d\ge2N+2,\\[1mm]
    +\infty,                             & d=2N+1,
  \end{cases}
\label{eq:ratiolimit}
\end{equation}
the second case reflecting the vanishing of $\rho$ at the fixed point
in the critical dimension, by~\eqref{eq:fixedpointcases}.
\end{theorem}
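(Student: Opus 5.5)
The plan is to compute $p/\rho$ directly from the two matter formulas and then use the master equation to eliminate $\dot Z$. By~\eqref{eq:rho_Z} and~\eqref{eq:pressure_explicit} both $\rho$ and $p$ carry the common prefactor $C^N(d-2)!\,(1-Z)^{N-1}/[2(d-2N-1)!\,x^N]$, so
\[
\frac{p}{\rho}=\frac{(d-N-1)(1-Z)}{(N-1)\bigl[(d-2N-1)(1-Z)-2Nx\dot Z\bigr]} .
\]
On the generic branch,~\eqref{eq:master} gives $x\dot Z=(1-Z)L(Z)/Q(Z)$, with $L$ from~\eqref{eq:Ldef} and $Q=2k+(2k-1)Z$ from~\eqref{eq:PQ}. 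Every non-constant trajectory avoids $Z=1$, by uniqueness at the fixed point as in Lemma~\ref{lem:noswitch}. So the factor $1-Z$ cancels, in either orientation, and leaves
\[
\frac{p}{\rho}=\frac{(d-N-1)\,Q}{(N-1)\bigl[(d-2N-1)Q-2NL\bigr]} .
\]

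Next I will eliminate $d$ in favour of $k$ using~\eqref{eq:k_def}. Since $d-2N=4k(1-N)$, one has $d-N-1=(N-1)(1-4k)$ and $d-2N-1=4k(1-N)-1$. The denominator $(4k(1-N)-1)Q-2NL$ is linear in $Z$, and I will expand it coefficient by coefficient.
\begin{itemize}
\item Its constant term is $2k(4k(1-N)-1)+8Nk^2=-2k(1-4k)$.
\item Its $Z$-coefficient should factor as $(1-4k)(4Nk-2N-2k+1)$. I will verify this by direct expansion.
\end{itemize}
Cancelling the common factor $(1-4k)\neq0$ then yields~\eqref{eq:ratio_closed}. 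This bookkeeping is the step most prone to error, and I expect it to be the main obstacle. The identity~\eqref{eq:factorid}, $4Nk-2N-2k+1=(2N-1)(2k-1)$, is the natural check.

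For~\eqref{eq:ratio_deriv} I will apply the Möbius derivative formula: $(\alpha Z+\beta)/(\gamma Z+\delta)$ has derivative $(\alpha\delta-\beta\gamma)/(\gamma Z+\delta)^2$. Here $\alpha=2k-1$, $\beta=2k$, $\gamma=(2N-1)(2k-1)$ and $\delta=-2k$. This gives
\[
\alpha\delta-\beta\gamma=-2k(2k-1)\bigl[1+(2N-1)\bigr]=-4Nk(2k-1),
\]
which is strictly negative because $k<0$ and $2k-1<0$. Strictly, the derivative must also be shown to be well defined along the orbit, and the next paragraph does this.

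I then need the denominator $\gamma Z+\delta$ not to vanish on the physical arc. Its zero is exactly where $\rho=0$, since the numerator $Q$ is nonzero there. So I will show that its root $Z_0=2k/[(2N-1)(2k-1)]$ satisfies $Z_0\le Z^*$, with equality precisely when $d=2N+1$. I expect this to follow from comparing $Z_0$ with $Z^*=4k^2/(2k-1)^2$ using $d-2N-1\ge0$. Equality at $d=2N+1$ is consistent with~\eqref{eq:fixedpointcases}.

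Given this, monotonicity in $r$ follows from Theorem~\ref{thm:unbounded}: $Z$ is strictly decreasing in $\tau$ on $(Z^*,1)$. For the limits I will evaluate~\eqref{eq:ratio_closed} at the two ends of the arc.
\begin{itemize}
\item At $Z=1$ it gives $(4k-1)/[(2N-1)(2k-1)-2k]$. Rewriting in $d,N$ should reproduce $(d-N-1)/[(N-1)(d-1)]$, which can be checked against~\eqref{eq:centralvals}.
\item At $Z=Z^*$ with $d\ge2N+2$ the value is $w_\infty$. I can obtain it most cheaply from the fixed-point ratios already computed in Sect.~\ref{ssec:isothermal}, giving~\eqref{eq:winf}.
\item For $d=2N+1$ the denominator tends to $0^+$ while the numerator stays positive. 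So $p/\rho\to+\infty$, giving~\eqref{eq:ratiolimit}.
\end{itemize}
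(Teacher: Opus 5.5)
Your proposal is correct and follows essentially the paper's own route: cancel the common prefactor, eliminate $x\dot Z$ with the master equation, substitute $d=2N+4k(1-N)$, differentiate the M\"obius map, and invoke $f(Z)<0$ on $(Z^*,1)$; it also usefully adds what the paper leaves implicit, namely that the root $Z_0=2k/[(2N-1)(2k-1)]$ of the denominator satisfies $Z_0\le Z^*$ with equality exactly when $d=2N+1$. One small sign slip: on the physical arc both the numerator $(2k-1)Z+2k$ and the denominator of \eqref{eq:ratio_closed} are negative, so at $d=2N+1$ the denominator tends to $0^-$ against a negative numerator, which still gives $p/\rho\to+\infty$.
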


\begin{proof}
Eliminating $x\dot{Z}$ from~\eqref{eq:rho_Z} by means of the master
equation~\eqref{eq:master} and dividing
by~\eqref{eq:pressure_explicit}, the factors $(1-Z)^{N-1}x^{-N}$
cancel; substituting $d=2N+4k(1-N)$ from~\eqref{eq:k_def} then
gives~\eqref{eq:ratio_closed}. Differentiating this M\"obius function
of $Z$ gives~\eqref{eq:ratio_deriv}, whose numerator $-4Nk(2k-1)$ is
negative because $-4Nk>0$ and $2k-1<0$. Monotonicity in $r$ follows
from $f(Z)<0$ on $(Z^*,1)$. \qed
\end{proof}

Theorem~\ref{thm:monotone} promotes a numerical observation to a
structural one, and shows that the central and limiting values bracket
the entire profile: the dominant energy condition, if it fails at all,
fails once and never recovers, so the crossing radius is unique. In the
critical dimension the ratio is unbounded, so the failure is
guaranteed.

\section{Physical analysis}
\label{sec:physical}

\subsection{Physical requirements}
\label{ssec:requirements}

We adopt the standard plausibility requirements of relativistic
astrophysics, adapted to the unbounded character established by
Theorem~\ref{thm:unbounded}. The density and pressure are required
to be positive and finite, and the metric functions $e^{\nu}$ and
$e^{\lambda}$ positive and non-singular, throughout. Causality demands
$0\le dp/d\rho\le1$. For the energy conditions we monitor the null
combination $\rho+p$, the dominant condition $\rho-p\ge0$
(equivalently $p/\rho\le1$ given $\rho>0$), and the strong condition on
the matter, which for a $d$-dimensional perfect fluid is the pair
$\rho+p\ge0$ and
\begin{equation}
  (d-3)\rho+(d-1)p\ge0 ,
\label{eq:sec}
\end{equation}
reducing to the familiar $\rho+3p\ge0$ at $d=4$.
Condition~\eqref{eq:sec} is algebraic in $T_{ab}$ and therefore remains
meaningful in Lovelock gravity; what does \emph{not} carry over is its
equivalence, valid under the Einstein equations, to the Ricci
convergence condition $R_{ab}u^au^b\ge0$, since the Lovelock field
equations relate $T_{ab}$ to $G^{(N)}_{AB}$ rather than to the Einstein
tensor. We therefore report~\eqref{eq:sec} as a condition on the source
and draw no geodesic-focusing conclusion from it. The pressure must
fall to zero as $r\to\infty$, consistent with the absence of a finite
boundary. The Buchdahl mass-to-radius bound~\cite{Buchdahl1959} applies
to bounded fluid distributions and is therefore relevant only to the
Schwarzschild interior branch.

\subsection{Numerical method}
\label{ssec:numerical}

We integrate the master equation~\eqref{eq:master} in
\textsc{Mathematica} via \texttt{NDSolve} with the
\texttt{StiffnessSwitching} option, setting $C=1$ so that $x=r^2$ with
$r$ the areal radial coordinate; all plots are presented against $r$.
Initial conditions are set at $x_0=10^{-3}$ with $Z(x_0)=1-ax_0$ and
$a=1$. By Lemma~\ref{lem:scaling} this single choice covers the whole
physical family: every other $a>0$ gives the same curve read in a
rescaled radial coordinate. The alternative orientation $a<0$ leaves
the physical arc and is excluded in Sect.~\ref{ssec:central}.

All thermodynamic quantities are evaluated algebraically from the state
variables rather than by differentiating the interpolating function:
the first derivative is taken from the flow itself, $\dot{Z}=f(Z)/x$,
and the second by differentiating the master equation, so that
$dp/dx$, $d\rho/dx$, the sound speed $v^2=(dp/dx)/(d\rho/dx)$ and the
adiabatic index are exact algebraic expressions in $(x,Z)$. This
eliminates interpolation noise and guarantees that the features
reported below are properties of the solutions and not of the mesh. The
implicit integral~\eqref{eq:implicit} is verified by confirming
constancy of its left-hand side to relative error below $10^{-8}$ along
each solution, and for $k=-\tfrac12$ the algebraic
relation~\eqref{eq:algebraic} is quadratic in $Z$ and provides an
independent exact check of the $(N,d)=(2,6)$ and $(3,10)$ profiles. The
central values reproduce the closed forms of
Sect.~\ref{ssec:central}; for instance $\rho_c=60$ and $p_c=36$ for
$(N,d)=(2,6)$ with $a=1$.

\subsection{Gauss--Bonnet cases \texorpdfstring{($N=2$)}{(N=2)}}
\label{ssec:GB}

We present results for $(N=2,\;d=5,6,7)$, corresponding to
$k=-\tfrac14,-\tfrac12,-\tfrac34$ and $Z^*=1/9,\,1/4,\,9/25$, the
first of these being the critical case $d=2N+1=5$, whose distinctive
far field was derived in Sect.~\ref{ssec:isothermal}.

Figure~\ref{fig:attract} displays the approach of the spatial potential
to the attractor values: the
ordering of the plateaux follows the analytic values of $Z^*$ and the
settling rates follow the eigenvalues $\lambda_*=-3,-2,-\tfrac53$.
Figure~\ref{fig:GB_profiles} shows density and pressure on logarithmic
axes together with the universal $r^{-4}$ asymptote
of~\eqref{eq:pressuredecay}: all three pressure profiles settle onto
the same power law, the central values follow the closed forms of
Sect.~\ref{ssec:central}, and the density of the critical case $d=5$
visibly steepens onto the faster decay
$\rho\sim r^{2\lambda_*-2N}=r^{-10}$.

\begin{figure}
  \centering
  \includegraphics[width=0.48\textwidth]{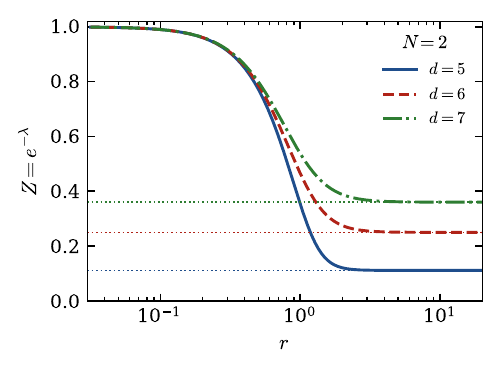}\hfill
  \includegraphics[width=0.48\textwidth]{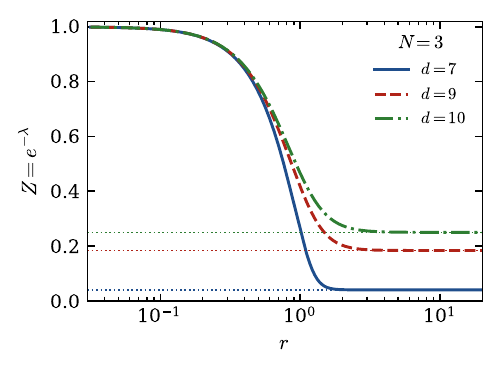}
  \caption{Approach of the spatial potential $Z=e^{-\lambda}$ to the
  attractor values $Z^*$ (dotted lines) for the Gauss--Bonnet cases
  (left) and the cubic Lovelock cases (right)}
  \label{fig:attract}
\end{figure}

\begin{figure}
  \centering
  \includegraphics[width=0.48\textwidth]{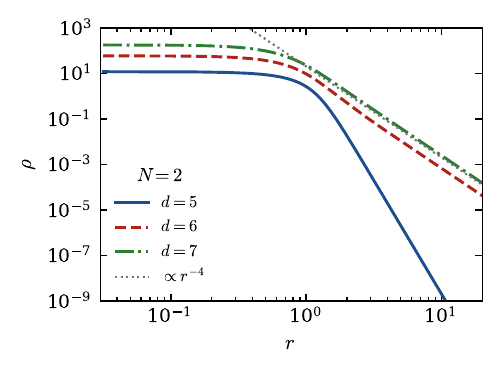}\hfill
  \includegraphics[width=0.48\textwidth]{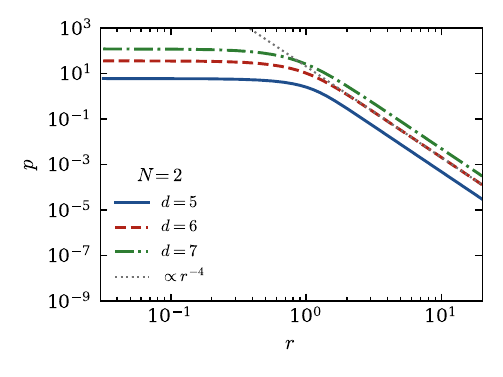}
  \caption{Energy density (left) and pressure (right) for the
  Gauss--Bonnet cases on logarithmic axes. The dotted guide marks the
  universal $r^{-2N}=r^{-4}$ decay of~\eqref{eq:pressuredecay}}
  \label{fig:GB_profiles}
\end{figure}

\subsection{Cubic Lovelock cases \texorpdfstring{($N=3$)}{(N=3)}}
\label{ssec:CL}

We present results for $(N=3,\;d=7,9,10)$, corresponding to
$k=-\tfrac18,-\tfrac38,-\tfrac12$ and $Z^*=1/25,\,9/49,\,1/4$. The
critical case $d=7$ exhibits the smallest fixed point of all
configurations studied, $e^{-\lambda}\to0.04$ asymptotically,
reflecting the strongest geometric deformation, and its eigenvalue
$\lambda_*=-5$ is the most negative of the six, producing the most
rapid settling onto the asymptote. The profiles are shown in
Fig.~\ref{fig:CL_profiles}: the pressure decays universally as
$r^{-6}$, while the density of the critical case steepens onto
$\rho\sim r^{2\lambda_*-2N}=r^{-16}$, an asymptotic slope approached
only slowly on the plotted range.

\begin{figure}
  \centering
  \includegraphics[width=0.48\textwidth]{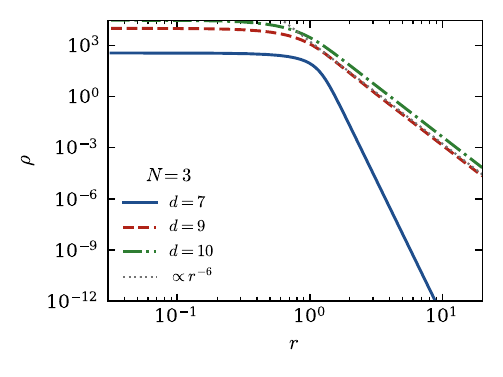}\hfill
  \includegraphics[width=0.48\textwidth]{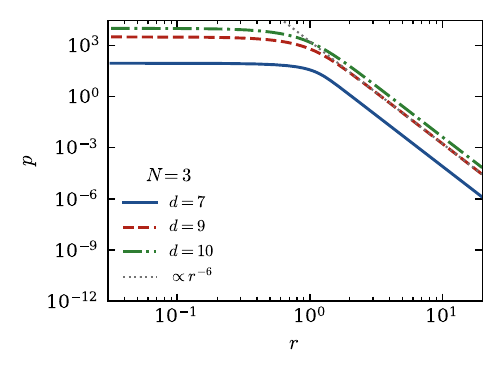}
  \caption{Energy density (left) and pressure (right) for the cubic
  Lovelock cases on logarithmic axes. The dotted guide marks the
  universal $r^{-2N}=r^{-6}$ decay}
  \label{fig:CL_profiles}
\end{figure}

\subsection{Causality, energy conditions and stability}
\label{ssec:causality}

Figure~\ref{fig:ratio} displays the ratio $p/\rho$, which encodes the
pointwise energy conditions at a glance: $\rho$ and $p$ are positive
throughout, so the null and weak conditions hold everywhere, and the
dominant condition holds precisely where $p/\rho\le1$. In every case the ratio rises monotonically from its central value
$(d-N-1)/[(N-1)(d-1)]$, as Theorem~\ref{thm:monotone} requires, the
curves illustrating that result rather than serving as evidence for it.
For the non-critical cases $d\ge2N+2$ the limit is the isothermal value
$w_\infty$ of~\eqref{eq:winf}, shown dashed; in the critical dimensions
$d=2N+1$, where $w_\infty$ does not exist, the ratio instead diverges,
by~\eqref{eq:ratiolimit}. For the Gauss--Bonnet cases
$w_\infty=3$ ($d=6$) and $w_\infty=2$ ($d=7$), and for $(N,d)=(3,9)$
one has $w_\infty=5/4$, so the dominant condition fails beyond a
crossing radius of order unity; in the
critical dimensions $d=2N+1$ the ratio grows without bound. The single
exception among the six is $(N,d)=(3,10)$, which saturates the
causality bound~\eqref{eq:causalitybound} with $w_\infty=1$: the ratio
approaches unity from below and the dominant condition is preserved,
marginally, at all radii. By~\eqref{eq:causalitybound} every $N=3$
configuration with $d>10$ behaves likewise, with $w_\infty<1$
strictly.

The sound speed (Fig.~\ref{fig:v2}) behaves analogously: it starts at
the subluminal central value $(d-N-1)/[(N-1)(d+1)]$
of~\eqref{eq:central} and increases towards $w_\infty$, so causality is
violated in the \emph{far} field rather than near the origin whenever
$w_\infty>1$, and diverges in the critical dimensions. The
configuration $(3,10)$ is again the only subluminal case among the six,
its sound speed approaching unity only asymptotically.

\subsection{The attractor displayed}
\label{ssec:attractorfig}

Figure~\ref{fig:attractorfig} exhibits the attractor directly rather
than through the phase portrait. The left panels plot
$r^{2N}\rho(r)$, which by~\eqref{eq:winf} must approach the constant
$(d-2N-1)(1-Z^*)^{N}$ times the prefactor of~\eqref{eq:rho_Z} whenever
$d\ge2N+2$; the curves flatten onto exactly those values, so the
numerical solutions are seen to settle onto the analytic isothermal law
$\rho\to\rho_0r^{-2N}$ and not merely to resemble it. The right panels
test the rate. Since $|Z-Z^*|\sim|\tilde{A}|r^{2\lambda_*}$
by~\eqref{eq:Zasymptote}, a logarithmic plot must be a straight line of
slope $2\lambda_*$, and the measured slopes reproduce
$2\lambda_*=-(2d-4)/(d-2N)$ to within a few parts in $10^{3}$ across
all six configurations: $-5.999$ against $-6$ for $(2,5)$, $-3.998$
against $-4$ for $(2,6)$, $-3.330$ against $-10/3$ for $(2,7)$,
$-9.971$ against $-10$ for $(3,7)$, $-4.666$ against $-14/3$ for
$(3,9)$ and $-3.998$ against $-4$ for $(3,10)$. The critical cases are
included in the right-hand panels, where the eigenvalue governs the
approach of $Z$ irrespective of the anomalous density tail, but omitted
from the left-hand panels, where $\rho(Z^*)=0$
by~\eqref{eq:fixedpointcases} and no finite plateau exists.

\begin{figure}
  \centering
  \includegraphics[width=\textwidth]{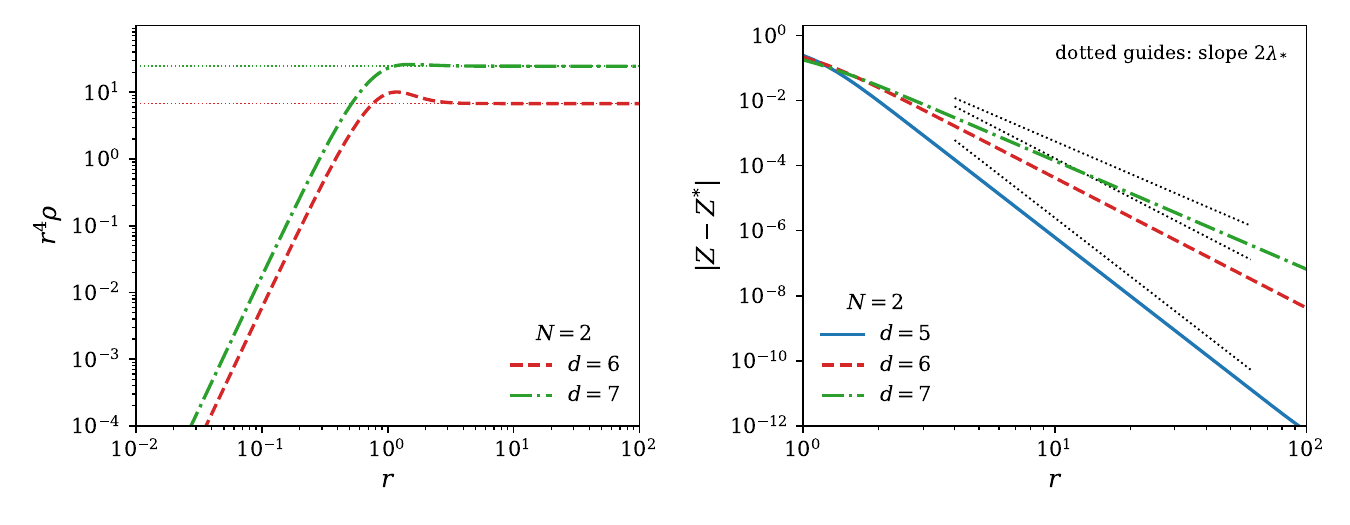}\\[2ex]
  \includegraphics[width=\textwidth]{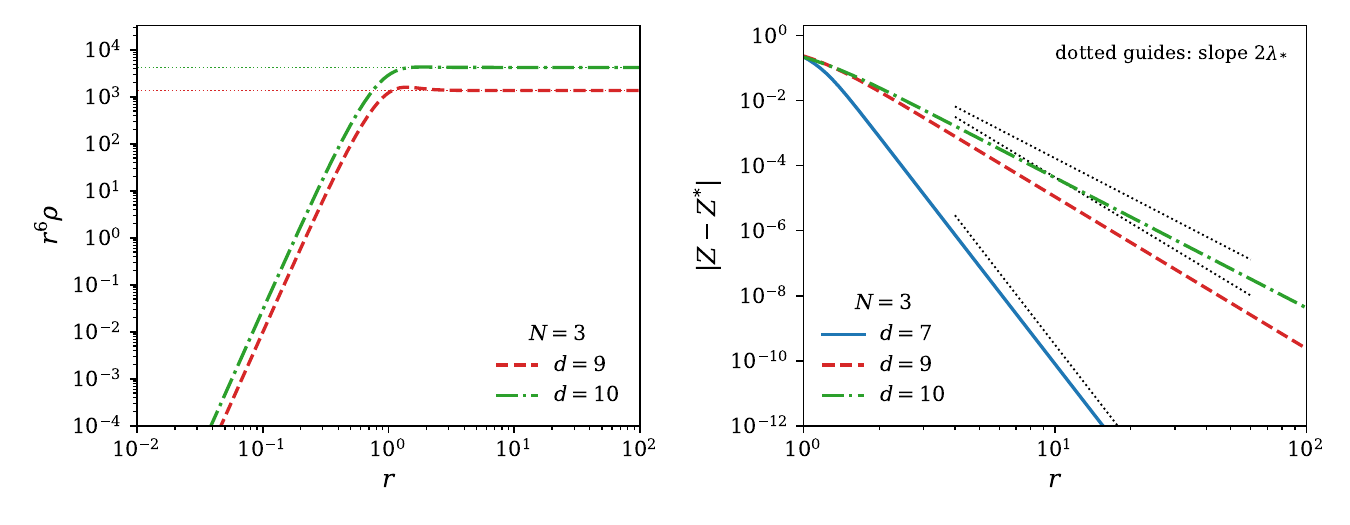}
  \caption{The isothermal attractor, for the Gauss--Bonnet
  ($N=2$, upper) and cubic Lovelock ($N=3$, lower) families. Left:
  $r^{2N}\rho$ approaches the constant predicted
  by~\eqref{eq:winf} (dotted), so that $\rho\to\rho_0r^{-2N}$. Right:
  $|Z-Z^*|$ on logarithmic axes, with dotted guides of the predicted
  slope $2\lambda_*$; the critical cases $d=2N+1$ appear here but not
  on the left, where the limiting density vanishes}
  \label{fig:attractorfig}
\end{figure}

\begin{figure}
  \centering
  \includegraphics[width=0.48\textwidth]{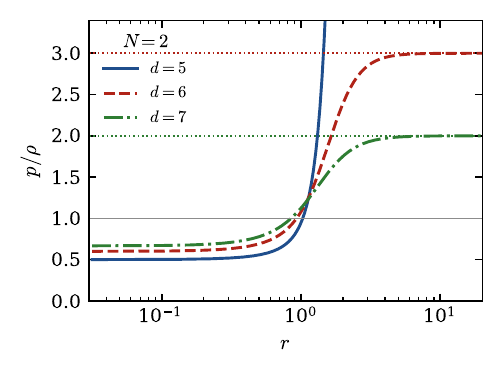}\hfill
  \includegraphics[width=0.48\textwidth]{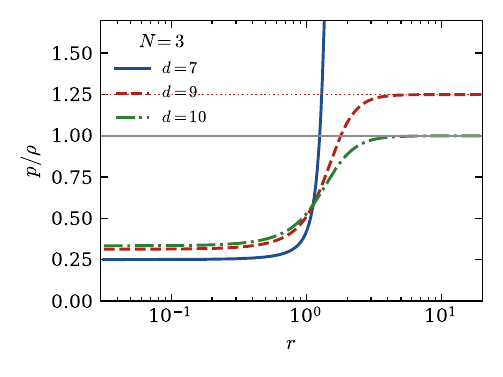}
  \caption{The ratio $p/\rho$ for the Gauss--Bonnet (left) and cubic
  Lovelock (right) cases. Dashed lines mark the isothermal values
  $w_\infty$ of~\eqref{eq:winf}; the grey line marks the dominant
  energy condition boundary $p/\rho=1$}
  \label{fig:ratio}
\end{figure}

\begin{figure}
  \centering
  \includegraphics[width=0.48\textwidth]{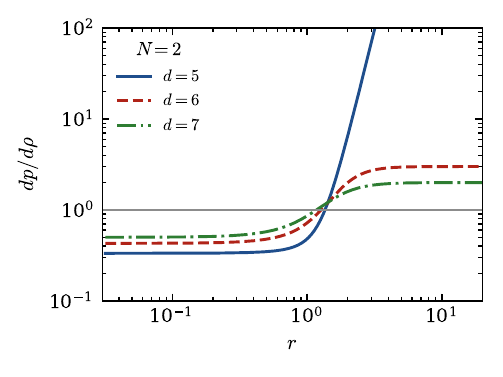}\hfill
  \includegraphics[width=0.48\textwidth]{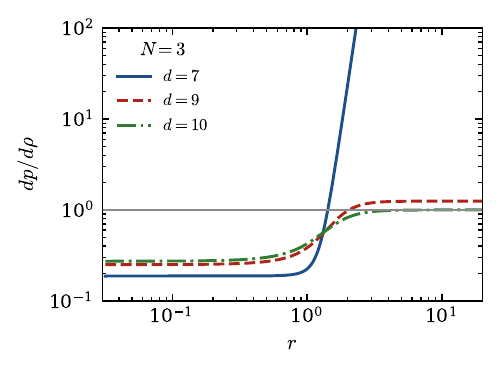}
  \caption{Sound speed $dp/d\rho$ for the Gauss--Bonnet (left) and
  cubic Lovelock (right) cases. The grey line marks the causal boundary
  $dp/d\rho=1$; central values are the closed forms
  of~\eqref{eq:central}}
  \label{fig:v2}
\end{figure}

\subsection{General observations}
\label{ssec:observations}

Across all six cases the numerical profiles confirm the analytic
structure of Sect.~\ref{sec:dynamical}. No pressure-free boundary
occurs at finite $r$ in any case, consistent with
Theorem~\ref{thm:unbounded}, and the pressure decays as $r^{-2N}$
with the correction exponent $2\lambda_*$ governing the settling rate.
The density is positive and monotonically decreasing from the finite
central value~\eqref{eq:centralvals}, with the critical dimensions
$d=2N+1$ exhibiting both the higher central densities and the anomalous
fast tail $\rho\sim r^{2\lambda_*-2N}$. Causality holds in a central
region of every configuration and is lost in the far field whenever
$w_\infty>1$, which occurs for every case plotted except
$(N,d)=(3,10)$; the same dichotomy governs the dominant energy
condition, while the null and weak conditions hold everywhere.
At fixed $N$, increasing $d$ raises $Z^*$ and slows the settling onto
the power law; at fixed $d$, increasing $N$ reduces $Z^*$, driving
$e^{-\lambda}$ to smaller asymptotic values. Finally, the implicit
integral~\eqref{eq:implicit} is confirmed constant along every
numerical solution to relative error below $10^{-8}$, and the
$k=-\tfrac12$ profiles agree with the exact quadratic
relation~\eqref{eq:algebraic} to comparable accuracy, validating
Theorem~\ref{thm:general}.

\section{Discussion}
\label{sec:discussion}

The central result is the two-branch structure of
Corollary~\ref{cor:twobranch}: the conformally flat isotropic pure
Lovelock spacetimes, for every $N\ge2$ and every admissible
$d\ge2N+1$, consist
of the constant-density Schwarzschild interior together with a generic
one-parameter family whose complete integral is~\eqref{eq:implicit_abs},
governed by the single dimension--order parameter $k=(d-2N)/[4(1-N)]$.
The parameter $k$ absorbs the entire $(d,N)$ dependence of the generic
branch. By Lemma~\ref{lem:scaling} the regular non-constant family in
$Z^*<Z<1$ carries a universal dimensionless profile, the remaining
constant setting only the radial and density scales; the map is a
homothety rather than an isometry, so its members are not isometric
copies of one another, and the equilibria lie outside the orbit
altogether.

Apart from the degenerate zero-density configuration identified in
Corollary~\ref{cor:einstein}, Einstein gravity selects the
Schwarzschild interior uniquely, constant density and conformal
flatness picking out the same solution.
Theorem~\ref{thm:schwbranch} shows that their intersection survives
intact at every Lovelock order, so that the
persistence established by Dadhich and
collaborators~\cite{Dadhich2010b,Dadhich2016a,Dadhich2017} for the
Schwarzschild vacuum, the isothermal sphere and the uniform-density
interior extends to conformal geometry. What distinguishes $N\ge2$ is
not the loss of the classical solution but the appearance alongside it
of a continuum of unbounded static configurations with no Einstein
counterpart; and by
Theorem~\ref{thm:unbounded} the classical solution remains the
unique \emph{bounded} member of the class. The Einstein uniqueness
theorem therefore generalises in the sharpened form: bounded and
conformally flat implies Schwarzschild interior, at every Lovelock
order. Relative to the Einstein--Gauss--Bonnet analysis
of~\cite{Hansraj2021}, which exhibited the second conformally flat
metric at $N=2$ in $d=5,6$, the present treatment supplies the
arbitrary-order factorisation~\eqref{eq:factorised}, the global
two-branch theorem for every admissible $(d,N)$, the explicit
parametric metric of Theorem~\ref{thm:explicit}, and the absence of a
finite pressure-free boundary on the second branch.

Theorem~\ref{thm:nogo} settles the closed-form question. Because $k$ is
rational for every admissible $(d,N)$, the spatial potential is always
an algebraic function of $x$, of degree exactly $u+v$ where $2k=-u/v$;
inversion by radicals is guaranteed on the families of degree at most
four, namely $d=3N-1$, $4N-2$, $6N-4$, $8N-6$ and $3d=8N-2$, and the
corresponding exact models are developed in the companion
paper~\cite{HansrajCompanion}. Degree five or more may still be
solvable in individual cases, so this list is where radicals are
assured rather than where they are possible; for the four cases
examined the obstruction is Galois-theoretic rather than merely
practical, their Galois groups being the full symmetric group
(Proposition~\ref{prop:galois}). The parametric
representation of Sect.~\ref{sec:parametric} then supplies the analytic
description of the physical generic orbit uniformly in $k$.

The phase-space analysis does substantially more than prove
Theorem~\ref{thm:unbounded}. The stability eigenvalue at the
attractor, obtained in the closed form
$\lambda_*=-(d-2)/(d-2N)$, is simultaneously the
correction-to-leading exponent of the large-$r$ pressure
profile~\eqref{eq:pressuredecay}: the leading decay $p\sim r^{-2N}$ is
universal and $\lambda_*$ controls how rapidly each trajectory in the
basin of $Z^*$ settles
onto it. The identification of the attractor with the pure Lovelock
isothermal sphere, valid for $d\ge2N+2$ and replaced in the critical
dimension by a zero-density state, then supplies the physical
interpretation of the branch and yields the causality
bound~\eqref{eq:causalitybound},
which resolves what would otherwise be a numerical curiosity:
$(N,d)=(3,10)$, the only subluminal case among the six plotted, is
precisely the saturating case $w_\infty=1$, every $N=3$ configuration
with $d>10$ being subluminal strictly. That $(2,6)$ and $(3,10)$ share
$k=-\tfrac12$ is, by contrast, an accident of the solvability
classification --- both belong to the quadratic family $d=4N-2$ ---
and not a causal one, since $(2,6)$ has $w_\infty=3$ and is
superluminal in the far field. Finally, the compactified phase portrait
identifies the basin of $Z^*$ as the interval $Z^Q_\infty<Z<1$ between
the pole and the repeller. Data below the pole and data above $Z=1$
both escape, the latter being the $a<0$ orientation excluded on
physical grounds in Sect.~\ref{ssec:central}; within the physical arc,
however, no fine-tuning whatever is required, and every initial
condition in $(Z^*,1)$ produces a regular unbounded model.

Physically, the two branches divide the conformally flat class along a
line that is easy to state. One describes bounded stars and nothing
else; the other describes extended halo-like distributions and can
never describe a star. Theorem~\ref{thm:unbounded} makes the
separation exact rather than a matter of parameter choice, and it is
the repeller at $Z=1$, with its universal eigenvalue $\lambda_1=1$,
that enforces it: a single trajectory cannot both leave a regular
centre and return to zero pressure at finite radius.

The attractor result gives the second branch a definite physical
character. Because $\lambda_*<0$ for every admissible pair, the
approach to the isothermal profile is a genuine relaxation: a
conformally flat configuration retains no memory of its central data in
the far field beyond an overall scale, and the closed form
$\lambda_*=-(d-2)/(d-2N)$ says how fast that memory decays. This is the
sense in which the present work goes beyond~\cite{Dadhich2016a}: the
pure Lovelock isothermal sphere was already known to be a universal
solution, and is shown here to be a universal \emph{endpoint}. The
dependence is instructive. At fixed $N$ the rate weakens as $d$ grows,
so higher-dimensional configurations approach the halo profile more
slowly; at fixed $d$ it strengthens with $N$, the higher-curvature term
driving the relaxation. In the critical dimension the ratio diverges
and the approach is fastest of all, which is the dynamical counterpart
of the anomalous density tail derived in Sect.~\ref{ssec:isothermal}.
That the limiting equation of state $w_\infty$ is fixed by $(d,N)$
alone, with no freedom left over, is the sense in which conformal
flatness selects a preferred asymptotic state.

A secondary result of independent interest is
equation~\eqref{eq:pressure_explicit}: on the generic branch the
pressure is determined solely by the spatial potential, the temporal
potential playing no dynamical role. This decoupling parallels a known
feature of the Finch--Skea solution~\cite{Finch1989,Hansraj2015b} and
suggests a structural principle specific to conformally flat classes.

Several directions remain. The exact radical models of the solvable
families, together with their physical analysis, are developed
in~\cite{HansrajCompanion}. We note that by
Theorem~\ref{thm:unbounded} the generic branch admits no finite
pressure-free surface, so smooth Darmois matching to a vacuum exterior
is unavailable there; any junction construction on this branch requires
either a thin shell or a non-vacuum exterior. Further extensions
include anisotropic fluids, charged solutions, and dynamical
(non-static) conformal flatness in the cosmological setting.

\appendix
\section{Algebraic degree and the Galois obstruction}
\label{app:galois}

This appendix supplies the proofs underlying
Sect.~\ref{ssec:algebraic}: that the algebraic degree of the spatial
potential is exactly $u+v$, and that four representative higher-degree
configurations have full symmetric Galois group and therefore admit no
radical inversion.

We first record the derivation of~\eqref{eq:algebraic}. Raising the
real integral~\eqref{eq:implicit_abs} to the power $v$, with $2k=-u/v$
in lowest terms and $u,v$ positive integers (recall $k<0$ in every
physical case), clears the fractional exponents and gives
\begin{equation}
  |Z-1|^{u+v}=c_1^{\,v}\,x^{u+v}\,|L(Z)|^{u}.
\label{eq:algebraic_abs}
\end{equation}
On each connected real sector of the phase line the signs of $Z-1$ and
of $L(Z)$ are constant, so the absolute values resolve into a single
non-zero constant and~\eqref{eq:algebraic_abs}
becomes~\eqref{eq:algebraic}, the constant $c$ depending on the sector.
Because this derivation starts from~\eqref{eq:implicit_abs} rather than
from the physical-arc form, it applies on \emph{every} real generic
sector and not only on $Z^*<Z<1$; the sector enters solely through the
sign of $c$, to which the degree and irreducibility arguments below are
insensitive. The relation is polynomial in $Z$ of degree $u+v$; that
this is also the degree of the \emph{minimal} polynomial requires
irreducibility, which the following lemma supplies.

\begin{lemma}[Irreducibility and degree]
\label{lem:irreducible}
For every admissible $(d,N)$ the polynomial~\eqref{eq:algebraic} is
irreducible over $K(x)$, where $K=\mathbb{Q}(c)$ is the field generated
over $\mathbb{Q}$ by the constant of the family; the scaling freedom of
Lemma~\ref{lem:scaling} may be used to normalise $|c|=1$, in which case
$K=\mathbb{Q}$. The spatial potential is consequently an algebraic
function of $x$ of degree exactly $u+v$, and the generic conformally
flat pure Lovelock metric is never transcendental in $x$.
\end{lemma}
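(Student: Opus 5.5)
The plan is to prove irreducibility by exchanging the roles of the two variables. Write $n=u+v$ and
\[
F(Z,x)=(Z-1)^{n}-c\,x^{n}L(Z)^{u}\in K[Z,x].
\]
First I will check that $F$ has $Z$-degree exactly $n$. Since $u<n$, the factor $L(Z)^u$ has $Z$-degree $u$, so the $Z^n$ coefficient of $F$ is the unit $1$. Hence $F$ is monic in $Z$ over $K[x]$. By Gauss's lemma, $F$ is irreducible in $K(x)[Z]$ if and only if it is irreducible in $K[x,Z]$, because a monic polynomial has no non-unit content. So it suffices to prove irreducibility in the two-variable polynomial ring. For that I regard $F$ as a polynomial in $x$ over $K[Z]$, namely the binomial $-cL^u\,x^{n}+(Z-1)^{n}$.

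Next comes primitivity in $x$. The two coefficients $(Z-1)^n$ and $cL(Z)^u$ are coprime in $K[Z]$. The only root of the first is $Z=1$, and $L(1)=(2k-1)^2-4k^2=1-4k\neq0$ because $k<0$. Also $c\neq0$. So $F$ is primitive over $K[Z]$. A second application of Gauss's lemma then reduces the problem to irreducibility of the binomial
\[
x^{n}-a,\qquad a=\frac{(Z-1)^{n}}{c\,L(Z)^{u}},
\]
over the rational function field $K(Z)$.

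This is the step I expect to be the main obstacle, and I would settle it by a valuation argument rather than by appealing to Capelli's theorem directly. Take the discrete valuation $\mathrm{ord}_L$ of $K(Z)$ attached to the linear (hence prime) polynomial $L(Z)$; its root is $Z^*\in\mathbb{Q}$. Then $\mathrm{ord}_L(a)=-u$, since $Z^*\neq1$. Because $2k=-u/v$ is in lowest terms, $\gcd(u,n)=\gcd(u,u+v)=\gcd(u,v)=1$.

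Now let $\xi$ be a root of $x^n-a$ in a splitting field, and extend $\mathrm{ord}_L$ to $K(Z,\xi)$ with ramification index $e$. Then $n\,\mathrm{ord}(\xi)=-u\,e$ holds in $\mathbb{Z}$, so $n$ divides $ue$. Since $\gcd(u,n)=1$, this forces $n\mid e$. On the other hand $e\le[K(Z,\xi):K(Z)]\le n$. Hence $[K(Z,\xi):K(Z)]=n$, the extension is totally ramified at $L$, and the binomial is irreducible. This argument is uniform in the sign of $c$, and so uniform in the real sector. It also works whether $K=\mathbb{Q}(c)$ or, after the normalisation $|c|=1$ permitted by Lemma~\ref{lem:scaling}, $K=\mathbb{Q}$; the sign is then absorbed by replacing $x$ with $-x$ if necessary.

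Finally, chaining the two applications of Gauss's lemma shows that $F$ is irreducible in $K[Z,x]$, and therefore in $K(x)[Z]$. So $Z(x)$ is algebraic over $K(x)$ of degree exactly $n=u+v$. In particular it is not transcendental, which is the remaining claim of the statement.
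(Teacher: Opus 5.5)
Your proof is correct, and its skeleton is the paper's. Like the paper, you regard the relation as the binomial $x^n-a$ over $K(Z)$ and use $\mathrm{ord}_L(a)=-u$ with $\gcd(u,n)=\gcd(u,v)=1$. You get primitivity from $L(1)=1-4k\neq0$, and you transfer irreducibility back to $K(x)[Z]$ by Gauss's lemma.

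The one genuine difference is the binomial step. The paper cites Capelli's theorem and uses the valuation only to show two things: that $a$ is not a $p$th power for any prime $p\mid n$, and that $a\notin-4K(Z)^4$ when $4\mid n$. You instead prove irreducibility directly, by showing that any root $\xi$ generates an extension totally ramified at $L$, so that $n\mid e\le[K(Z,\xi):K(Z)]\le n$. This is essentially an Eisenstein argument at the prime $L$. Your version is self-contained and needs no separate treatment of the exceptional quartic case; the paper's is shorter given the citation. Your explicit check that $F$ is monic of degree $n$ in $Z$ also spells out two points the paper states without comment: the final transfer to $K(x)[Z]$, and the claim that the degree is exactly $u+v$.

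One small correction concerns your closing remark that the sign of $c$ can be absorbed by $x\mapsto-x$. That is false when $n$ is even (e.g.\ $k=-\tfrac12$, $n=2$). It is also unnecessary, since $c=\pm1$ already lies in $\mathbb{Q}$ and your valuation argument uses only $c\neq0$, never its sign. Simply drop it.
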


\begin{proof}
Write $n=u+v$ and $L(Z)=(2k-1)^2Z-4k^2$, and regard~\eqref{eq:algebraic}
as a polynomial in $x$ over $K(Z)$, namely $x^n-a$ with
$a=(Z-1)^n/[c\,L(Z)^u]$. By Capelli's theorem $x^n-a$ is irreducible
over $K(Z)$ provided $a$ is not a $p$th power in $K(Z)$ for any prime
$p\mid n$, and, when $4\mid n$, provided $a\notin-4\,K(Z)^4$. The valuation of $a$ at the root of $L$
is $-u$; were $a$ a $p$th power, $p$ would divide $u$, and since
$p\mid n=u+v$ it would divide $\gcd(u,v)=1$, a contradiction. The same
valuation argument excludes the exceptional quartic case. Hence
$x^n-a$ is irreducible over $K(Z)$. Moreover $Z=1$ is not a
root of $L$, since $L(1)=1-4k\neq0$ for $k<0$, so $(Z-1)^n$ and
$L(Z)^u$ are coprime and~\eqref{eq:algebraic} is primitive in
$K[Z][x]$. Gauss's lemma then transfers irreducibility to $K[Z,x]$ and
hence to $K(x)[Z]$. \qed
\end{proof}

A closed form by radicals is \emph{guaranteed} whenever this degree does
not exceed four. Enumerating the coprime pairs with $u+v\le4$ gives the
complete list of such families,
$k\in\{-\tfrac14,-\tfrac12,-1,-\tfrac32,-\tfrac16\}$, corresponding
respectively to $d=3N-1$, $d=4N-2$, $d=6N-4$, $d=8N-6$ and $3d=8N-2$,
the last requiring $N\equiv1\pmod 3$ for $d$ to be an integer and so
first arising physically at $(N,d)=(4,10)$. We emphasise that degree at
most four is sufficient but not necessary: a polynomial of degree five
or more may still have a solvable Galois group, so the list above is
where radicals are assured, not where they are possible. These include
the Gauss--Bonnet configurations
$d=5$ (cubic in $Z$) and $d=6$ (quadratic) and the cubic Lovelock
configuration $d=10$ (quadratic) among the cases studied in
Sect.~\ref{sec:physical}. The resulting exact models and their physical
analysis are the subject of a companion
paper~\cite{HansrajCompanion}; here we retain the implicit and
parametric descriptions, which treat all $(d,N)$ uniformly.
Table~\ref{tab:cases} records the algebraic data for representative
cases.

\begin{proposition}[Insolubility by radicals]
\label{prop:galois}
For $(N,d)=(2,7)$, $(3,7)$, $(3,9)$ and $(4,9)$ the generic-branch
potential $Z(x)$ admits no closed-form expression by radicals. The
Galois group of~\eqref{eq:algebraic} is $S_5$ in the first two cases
and $S_7$ in the last two.
\end{proposition}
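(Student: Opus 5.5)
Since $Z(x)$ is defined by the specialised relation~\eqref{eq:algebraic}, I would show that, for a suitable rational specialisation of $x$ (and of the sector constant $c$, normalised to $|c|=1$ as in Lemma~\ref{lem:irreducible}), the resulting integer polynomial in $Z$ has Galois group $S_n$ over $\mathbb{Q}$, with $n=u+v\in\{5,7\}$. By Hilbert irreducibility, or more directly because the Galois group of a specialisation is a subgroup of the generic group over $\mathbb{Q}(x)$, this forces the generic group to be $S_n$. Since $S_5$ and $S_7$ are not solvable, no radical expression of $Z(x)$ exists. The four cases have $2k=-3/2$, $-1/4$, $-3/4$, $-1/6$, so $(u,v)=(3,2),(1,4),(3,4),(1,6)$. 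The polynomials are
\[
(Z-1)^{n}-c\,x^{n}\bigl[(2k-1)^2Z-4k^2\bigr]^{u},
\]
which I would clear of denominators, e.g. $(2k-1)^2=25/4$ and $4k^2=9/4$ for $(2,7)$.

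The key steps are as follows. First, I would pick a small integer value such as $x=1$ with $c=\pm1$, obtaining an explicit integer polynomial $F(Z)$ of degree $n$ (the leading coefficient is nonzero because the degree-$n$ term comes only from $(Z-1)^n$ whenever $u<n$). Second, I would apply Dedekind's theorem: factor $F$ modulo several primes not dividing the discriminant, so that the factorisation pattern gives the cycle type of a Frobenius element. For $n$ prime it suffices to exhibit an irreducible reduction (an $n$-cycle, which also certifies irreducibility over $\mathbb{Q}$) together with a reduction of type $(2)(1)^{n-2}$, or of type $(2)(\text{odd})$ whose suitable power is a transposition. A transitive subgroup of $S_p$ with $p$ prime that contains a transposition is all of $S_p$. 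These factorisations are the modular certificates mentioned in the appendix. Third, I would transfer the conclusion from the specialisation back to $\mathbb{Q}(x)$ as described above. The result is also consistent with the irreducibility proved in Lemma~\ref{lem:irreducible}.

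The main obstacle is finding specialisations whose reductions actually produce a transposition. The polynomials are highly structured: for $u=1$, $F$ is $(Z-1)^n$ minus a linear term, a trinomial-like shape. For such shapes a transposition may arise only for particular primes, and the discriminant must be computed to exclude ramified primes. I would first try $x=1$ with both signs of $c$ and scan primes up to a few hundred. For the trinomial-like cases $u=1$, I would instead try to use the known discriminant formula for $(Z-1)^n-\alpha Z+\beta$ after the shift $Z\mapsto Z+1$, which gives a genuine trinomial $W^n-\alpha W-\gamma$. Irreducible trinomials of this type have Galois group $S_n$ under a coprimality condition (Osada's theorem), which would give a proof free of computer search for $(3,7)$ and $(4,9)$. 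The remaining cases $(2,7)$ and $(3,9)$ would rest on the explicit modular factorisations.
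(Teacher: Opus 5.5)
Your main line is the paper's proof. You specialise $x$ at a rational value where the polynomial stays separable, certify an $n$-cycle and a transposition by Dedekind reductions at primes of good reduction, and lift via the embedding of the specialised Galois group into the generic one over $\mathbb{Q}(x)$. Only that embedding is needed; Hilbert irreducibility is the wrong tool. Raising a Frobenius of type $2{+}5$ to the fifth power to get a transposition is, if anything, a cleaner finish in degree seven than the paper's appeal to the orders of the transitive subgroups of $S_7$.

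The step that would fail is the Osada shortcut for $(3,7)$ and $(4,9)$. For $u=1$ one has $n=v+1$ and $L\propto n^2Z-1$. With $W=Z-1$ and any integral rescaling $W\mapsto W/\mu$, the specialised trinomial is $W^n+AW+B$ with $A=\pm n^2x_0^{\,n}\mu^{n-1}$ and $B=\pm(n^2-1)x_0^{\,n}\mu^{n}$. Since $n-1$ is even, $(n-1)A$ is even, so Osada's hypothesis $\gcd((n-1)A,nB)=1$ forces $B$ to be odd. That means $v_2(n^2-1)+n\,[v_2(x_0)+v_2(\mu)]=0$, which is impossible because $v_2(24)=3$ is not a multiple of $5$ and $v_2(48)=4$ is not a multiple of $7$. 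This obstruction is structural to the family, not an unlucky choice of $x_0$.

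Those two cases therefore also need explicit modular certificates, exactly as in the paper. For $(3,7)$ the paper takes $x=2$, where the quintic is irreducible mod $7$ and of type $1{+}1{+}1{+}2$ mod $3$. For $(4,9)$ it takes $x=1$, where the septic is irreducible mod $5$ and of type $2{+}5$ mod $29$.

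If you want a search-free argument, use monodromy rather than Osada. The map $t=(Z-1)^n/L(Z)^u$ has degree $n$ and is totally ramified at $Z=1$. Apart from $Z=1$, $Z^*$ and $\infty$, it has a single critical point, and that point is simple. Its monodromy group over $\mathbb{C}(t)$ therefore contains an $n$-cycle and a transposition, so it is $S_n$ for prime $n$. Pulling back along $t=cx^n$ cannot shrink it, since $S_n$ has no nontrivial cyclic quotient of odd order.
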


\begin{proof}
In each case we first clear the rational prefactor of $L(Z)$
in~\eqref{eq:algebraic}, absorbing it together with $c$ into a single
constant $\tilde{c}$, and then use the scaling freedom of
Lemma~\ref{lem:scaling} to set $\tilde{c}=-1$ on the physical sector,
the sign being the one dictated by $(1-Z)^{u+v}=|c|\,x^{u+v}L(Z)^u$
with $u+v$ odd. The coefficients then lie in $\mathbb{Q}(x)$. For
$(N,d)=(2,7)$ one has $k=-\tfrac34$ and~\eqref{eq:algebraic} becomes
the quintic $(Z-1)^5=-x^5(25Z-9)^3$; for $(3,7)$, $k=-\tfrac18$, and
the relation is $(Z-1)^5=-x^5(25Z-1)$. Now specialise the radial
variable $x$ to a rational value at which the resulting quintic remains
separable and irreducible. The Galois group of the specialised
polynomial is then a subgroup of the Galois group of the generic
polynomial over $\mathbb{Q}(x)$, so exhibiting a single specialisation
with group $S_5$ forces the generic group to be $S_5$, that being the
full symmetric group on five letters. We specialise $x$ itself rather
than any composite of it, so that the conclusion concerns the function
field actually appearing in~\eqref{eq:algebraic}.

Taking $x=1$ in the first case and $x=2$ in the second, the specialised
quintics are
\begin{align}
  f_1(Z)&=(Z-1)^5+(25Z-9)^3
        \nonumber\\
        &=Z^5-5Z^4+15635Z^3-16885Z^2+6080Z-730,
\label{eq:f1}\\
  f_2(Z)&=(Z-1)^5+32(25Z-1)
        \nonumber\\
        &=Z^5-5Z^4+10Z^3-10Z^2+805Z-33 .
\label{eq:f2}
\end{align}
Both are irreducible over $\mathbb{Q}$, with discriminants
$2^{42}5^{5}\cdot17\cdot4967$ and $2^{32}5^{5}\cdot13\cdot487$
respectively. Reducing modulo primes dividing neither the discriminant
nor the leading coefficient, so that Dedekind's theorem applies, the
factorisation degree patterns are
\begin{align}
  f_1 &: \;\text{irreducible mod }29,\quad 1{+}1{+}1{+}2\;\text{mod }149,
  \nonumber\\
  f_2 &: \;\text{irreducible mod }7,\;\;\;\;\; 1{+}1{+}1{+}2\;\text{mod }3,
\label{eq:certificates}
\end{align}
all four reductions being separable. Each first entry supplies a
$5$-cycle and each second a transposition. A transitive subgroup of
$S_5$ containing a transposition and a $5$-cycle is all of $S_5$, which
is not solvable; hence no expression of $Z$ in radicals exists in
either case.

Finally, the conclusion is independent of the sign convention, and so
covers the orientation $Z>1$ as well as the physical arc. Since
$u+v=5$ is odd, the map $x\mapsto-x$ reverses the sign of $x^{5}$ and
therefore carries~\eqref{eq:algebraic} with one choice of sign into the
same relation with the other. As $x\mapsto-x$ is an automorphism of
$\mathbb{Q}(x)$ fixing $\mathbb{Q}$, the two polynomials are conjugate
over $\mathbb{Q}(x)$ and have the same generic Galois group.

For $(N,d)=(3,9)$ and $(4,9)$ the degree is $u+v=7$, and
\eqref{eq:algebraic} at $x=1$ gives, in the physical sign,
\begin{align}
  g_1(Z)&=(Z-1)^7+(49Z-9)^3,
\label{eq:g1}\\
  g_2(Z)&=(Z-1)^7+(49Z-1),
\label{eq:g2}
\end{align}
Irreducibility over $\mathbb{Q}$ is immediate from the reductions
$g_1$ mod $11$ and $g_2$ mod $5$, both of good reduction and both
irreducible of degree seven; hence both Galois groups are transitive of
degree seven. The transitive subgroups of $S_7$ have
orders $7$, $14$, $21$, $42$, $168$, $2520$ and $5040$. Reduction
modulo $13$ for $g_1$ and modulo $29$ for $g_2$, both primes of good
reduction, gives in each case the factorisation type $2+5$: an element
of order ten and odd parity. Only $2520$ and $5040$ are divisible by
ten, and the group of order $2520$ is $A_7$, which contains no odd
permutation; hence both groups are $S_7$, which is not solvable.
Equivalently and independently, the factorisation type
$1{+}1{+}1{+}1{+}1{+}2$ occurs at $p=2011$ for $g_1$ and $p=2113$ for
$g_2$, exhibiting a transposition; a transitive group of prime degree
is primitive, and by Jordan's theorem a primitive group containing a
transposition is the full symmetric group. \qed
\end{proof}

\begin{remark}
The primes in~\eqref{eq:certificates} must be chosen with care, since a
degree multiset alone does not certify a cycle type. For the
opposite-sign quintic $(Z-1)^5-(25Z-9)^3$, for instance, the reduction
modulo $59$ is $(Z+19)(Z-23)^2(Z^2+22Z+15)$, whose degrees are
nominally $1{+}1{+}1{+}2$; but $59$ divides that discriminant, the
reduction is not separable, and Dedekind's theorem does not apply.
Every prime used in~\eqref{eq:certificates} is one of good reduction.
\end{remark}

\begin{acknowledgments}
The author thanks the University of KwaZulu-Natal for its continued
support.
\end{acknowledgments}

\section*{Declarations}

\noindent\textbf{Funding.} The author declares that no funds, grants or
other support were received during the preparation of this
manuscript.

\medskip\noindent\textbf{Competing interests.} The author has no
relevant financial or non-financial interests to disclose.

\medskip\noindent\textbf{Data availability statement.} This manuscript
has no associated data.  All results are analytic or
are reproduced by direct numerical integration of
equation~(\ref{eq:master}) using the initial data and parameter values stated in Sect.~\ref{ssec:numerical}; no external data sets were generated or analysed.

\medskip\noindent\textbf{Code availability.} The \textsc{Mathematica}
notebook used to generate Figs.~\ref{fig:phase}--\ref{fig:v2} is
available from the author on reasonable request.


\end{document}